\theoremstyle{plain}
\newtheorem{theorem}{Theorem}[section]
\newtheorem{lemma}[theorem]{Lemma}
\theoremstyle{definition}
\numberwithin{equation}{section} 
\newcommand{\Ex}{\mathbb{E}}
\newcommand{\floor}[1]{\llcorner #1 \lrcorner}
\newcommand{\ceil}[1]{\ulcorner #1 \urcorner}
\newenvironment{MyEqn}[1]{\setlength\arraycolsep{2pt}\begin{eqnarray*}
#1}{\end{eqnarray*}}%
\renewenvironment{proof}{\begin{trivlist} \item[\hspace{\labelsep}{\bf
\noindent Proof.\/}] }{\qedsymb\end{trivlist}}
\newcommand{\qedsymb}{\hfill{\rule{2mm}{2mm}}}
\newcommand{\Init}{{\sf Initialize}}
\newcommand{\Insert}{{\sf Insert}}
\newcommand{\Membership}{{\sf Membership}}
\newcommand{\Yes}{{\sf Yes}}
\newcommand{\No}{{\sf No}}
\begin{document}

\title{How to Approximate A Set \\ Without Knowing Its Size In Advance}
\author{Rasmus Pagh\thanks{IT University of Copenhagen. Email: {\tt pagh@itu.dk}.} \and Gil Segev\thanks{Stanford University,
Stanford, CA 94305, USA. Email: {\tt segev@stanford.edu}.} \and Udi Wieder\thanks{Microsoft Research Silicon Valley, Mountain View, CA 94043, USA. Email:  {\tt uwieder@microsoft.com}.}}
\date{}
\maketitle

\begin{abstract}
The dynamic approximate membership problem asks to represent a set $S$ of size $n$, whose elements are provided in an on-line fashion, supporting membership queries without false negatives and with a false positive rate at most $\epsilon$. That is, the membership algorithm must be correct on each $x \in S$, and may err with probability at most $\epsilon$ on each $x \notin S$.

We study a well-motivated, yet insufficiently explored, variant of this problem where the size $n$ of the set is not known in advance. Existing optimal approximate membership data structures require that the size is known in advance, but in many practical scenarios this is not a realistic assumption.
Moreover, even if the eventual size $n$ of the set is known in advance,
it is desirable to have the smallest possible space usage also when the current number of inserted elements is smaller than $n$.
Our contribution consists of the following results:
\begin{itemize}
\item We show a super-linear gap between the space complexity when the size is known in advance and the space complexity when the size is not known in advance. When the size is known in advance, it is well-known that $\Theta(n \log(1/\epsilon))$ bits of space are necessary and sufficient (Bloom '70, Carter et al.\ '78). However, when the size is not known in advance, we prove that at least $(1 - o(1))n \log(1/\epsilon) + \Omega(n \log \log n)$ bits of space must be used. In particular, the average number of bits per element must depend on the size of the set.

\item We show that our space lower bound is tight, and can even be matched by a highly efficient data structure. We present a data structure that uses $(1+o(1)) n\log(1/\epsilon) \allowbreak +  O(n\log\log n)$ bits of space for approximating any set of any size $n$, without having to know $n$ in advance. Our data structure supports membership queries in constant time in the worst case with high probability, and supports insertions in expected amortized constant time. Moreover, it can be ``de-amortized'' to support also insertions in constant time in the worst case with high probability by only increasing its space usage to $O(n\log(1/\epsilon) +  n\log\log n)$ bits.
\end{itemize}%

\bigskip \noindent
{\small {\bf Keywords:} Data structures, approximate membership, lower bounds, upper bounds.}
\end{abstract}

\thispagestyle{empty}
\newpage \setcounter{page}{1}

\section{Introduction}

Dictionaries play a fundamental role in the design and analysis of algorithms, enabling representation of any given set $S$ while supporting membership queries. For sets of size $n$ that are taken from a universe $U$ of size $u$, any dictionary must clearly use at least $\log \binom{u}{n} = n\log(u/n)+\Theta(n)$ bits of space\footnote{Throughout this paper all logarithms are to the base $2$.}. Whereas dictionaries offer {\em exact} representations of sets, in many realistic scenarios it is desirable to trade exact representations with {\em approximate} ones in order to reduce space consumption. This was observed already by Bloom \cite{Bloom70}, whose classical design of a Bloom Filter provides a simple and practical alternative to dictionaries.

Bloom's data structure solves the problem known these days as the {\em approximate membership} problem. This problem asks to represent any given set $S$ of size $n$ while supporting membership queries without false negatives, and with a false positive rate at most $\epsilon$. That is, the membership algorithm must be correct on any $x \in S$, and may err with probability at most $\epsilon$ on any $x \in U \setminus S$ (where the probability is taken over the randomness used by the data structure). The approximate membership problem can be considered in the {\em static} setting where the set is specified in advance, or in the {\em dynamic} setting where the elements of the set are specified one by one in an on-line fashion.

Bloom's data structure uses only $\log e \cdot n \log(1/\epsilon)$ bits of space (and solves the problem even in the dynamic setting), and Carter et al.\ \cite{CarterFGMW78} proved that this is essentially optimal: Any approximate membership data structure must use at least $n \log(1/\epsilon)$ bits of space, even in the static setting. Over the years a long line of research has shown how to design approximate membership data structures that are essentially optimal in both their space utilization and efficiency of their operations. We refer the reader to the survey of Broder and Mitzenmacher \cite{BroderM03} for various applications for approximation membership data structures, and to Section \ref{SubSec:RelatedWork} for an overview of the known results.

\paragraph{Approximating sets of unknown sizes.}  The vast majority of existing approximate membership data structures require that the size $n$ of the set $S$ to be approximated will be known in advance. In many practical scenarios, however, it is unrealistic to assume that the size is known in advance~\cite{GuoWCL06}.
Moreover, even if the eventual size $n$ of the set is known in advance,
it is desirable to have the smallest possible space usage also when the current number of inserted elements is smaller than $n$.

In this paper we study the well motivated, yet insufficiently explored, variant of the dynamic approximate membership problem where the size $n$ of the set is not known in advance. We refer to this problem as {\em approximate membership for sets of unknown sizes}. This problem is parameterized by $u \in \mathbb{N}$ and $0 < \epsilon < 1$, and asks to design a data structure offering three algorithms: $\Init$, $\Insert$, and $\Membership$. Upon initialization via the $\Init$ algorithm, the data structure is presented with a sequence of elements that are taken from a universe $U$ of size $u$. The elements are specified in an on-line fashion, and each element is processed using the $\Insert$ algorithm that updates the internal state of the data structure. The $\Membership$ algorithm should satisfy the following two requirements:
\begin{itemize}
\item {\bf No false negatives:} For any $n \leq u$, $S \subseteq U$ of size $n$, and $x \in S$, the $\Membership$ algorithm always outputs $\Yes$ on $x$ after the elements of $S$ are processed by the $\Insert$ algorithm.

\item {\bf False positive rate at most $\epsilon$:} For any $n \leq u$, $S \subseteq U$ of size $n$, and $x \notin S$, the $\Membership$ algorithm outputs $\Yes$ on $x$ with probability at most $\epsilon$ after the elements of $S$ are processed by the $\Insert$ algorithm (where the probability is taken over the randomness of the data structure).
\end{itemize}

\paragraph{Gradually-increasing space consumption.} For the approximate membership problem when the size $n$ of the set is known in advance, it is well-known that $\Theta(n \log(1/\epsilon))$ bits of space suffice even in the dynamic setting, and are essential even in the static setting (recall that a Bloom filter uses $O(n \log(1/\epsilon))$ bits which is asymptotically optimal). That is, the average number of bits for representing each element is $\Theta(\log(1/\epsilon))$ which is independent of the size of the set.

In this light, a natural question is whether this is also the case when the size $n$ is not known in advance, and the data structure is required to work for sets of any size $n \leq u$. That is, we ask the following question: Is there a dynamic approximate membership data structure that uses space $O(n \log(1/\epsilon))$ for representing any set $S$ of any size $n \leq u$? Somewhat surprisingly, this question was so far addressed only from a practical perspective, and has not been investigated from a foundational perspective. Moreover, the data structures we could find in the literature~\cite{AlmeidaBPH07, HaoKL08, GuoWCL06, GuoWCYL10, WeiJZF11, WeiJZF13} use space $\Omega(n\log n)$ bits (and query time $\Omega(\log n)$ or $\Omega(\log(1/\epsilon))$). These solution are somewhat naive from an algorithmic point of view, and provide poor asymptotic bounds.

\subsection{Our Contributions}

We present a lower bound and matching upper bounds on the space complexity of approximate membership for sets of unknown sizes. Our lower bound shows that if the size $n$ of the size of the sets to be approximated is not known in advance, then it is not possible to use an average of $O(\log(1/\epsilon))$ bits per elements as in the standard case. Specifically, we show a super-linear gap between the space complexity when $n$ is known in advance and the space complexity when $n$ is not known in advance. We prove the following theorem:

\begin{theorem}[Lower bound -- informal]\label{Thm:LowerBoundInformal}
Any data structure for approximate membership for sets of unknown sizes with false positive rate $\epsilon$ must use space $(1 - o(1))n \log(1/\epsilon) + \Omega(n \log \log n)$ bits after some number of insertions $n>u^{\delta}$, for any arbitrary small constant $0 < \delta < 1$.
\end{theorem}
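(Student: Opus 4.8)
The plan is to combine the classical entropy bound for approximate membership with a new argument exploiting that one and the same structure must be correct at a whole geometric ladder of set sizes. First I would pass to deterministic structures. The false-positive requirement says that for every fixed $S$ of size $n$ and every $x\notin S$ the probability (over the internal randomness) of accepting $x$ is at most $\epsilon$; summing over $x\notin S$, the expected number of false positives of the final state is at most $\epsilon(u-n)$. Hence, by Markov's inequality together with an averaging argument over the few sizes that will matter, I can fix the internal randomness so that for a $1-o(1)$ fraction of input sequences the accepting set after each relevant number of insertions has at most $O(\epsilon u)$ false positives. From now on the structure is a fixed deterministic online map: the state after $i+1$ insertions is a fixed function of the state after $i$ insertions and the new element, and from a state $M$ one reads off an accepting set $A(M)$ containing all elements inserted so far; for a set $S$ I write $A_S=A(M_S)$, with $|A_S\setminus S|\le O(\epsilon u)$ on the typical run. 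The $(1-o(1))n\log(1/\epsilon)$ term then follows exactly as in Bloom/Carter et al.: taking $n$ to be a constant fraction of $\epsilon u$ (so we also need $\epsilon>u^{\delta-1}$, else this range of $n$ is vacuous) and $\epsilon=o(1)$, encoding $S$ by transmitting $M_S$ and then the index of $S$ inside $A_S$ costs $|M_S|+\log\binom{|A_S|}{n}\le |M_S|+n\log\!\big(e(n+O(\epsilon u))/n\big)$ bits, which must be $\ge\log\binom{u}{n}\ge n\log(u/n)$; since $u/n=\Theta(1/\epsilon)$ this gives $|M_S|\ge(1-o(1))n\log(1/\epsilon)$. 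Note, however, that this argument is ``zero-sum'' in the sense that it never interacts with the behaviour of the structure at other sizes, and so cannot by itself produce the additive $\Omega(n\log\log n)$.

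For the new term I would work with the checkpoint sizes $n_j=u^{\delta}2^{j}$, $j=0,\dots,k$, with $n_k=n$, so that $k=\Theta(\log u)=\Theta(\log n)$, and track the states $M_{n_0},\dots,M_{n_k}$ and their accepting sets (each $A_{n_j}$ contains $S_{n_j}$ and, typically, at most $O(\epsilon u)$ false positives). The intuition --- read off from the matching upper bound, in which the space-optimal solution behaves like a stack of $\Theta(\log n)$ nested filters, the $j$-th one storing $\Theta(n_j)$ elements with its own false-positive rate $\epsilon_j$ --- is that correctness at \emph{all} sizes forces the structure to reserve ``false-positive budget'' for insertions it has not yet seen: a query must remain consistent with every level, and since the structure does not know in advance at which level the input will stop, the per-level allowances must form a convergent series bounded by $\epsilon$, which forces $\epsilon_j=\epsilon/\Omega(\log n)$ at the top levels. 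The classical counting bound applied to the constant fraction of the $n$ elements that live at such a level charges $\log(1/\epsilon)+\Omega(\log\log n)$ bits apiece, which is exactly the claimed additive term.

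The main obstacle is to make this ``budget accounting'' rigorous against an \emph{arbitrary} online structure, rather than one that literally has a layered form. Concretely, one has to rule out that $M_n$ only pins down, for each checkpoint $j$, the phase-$j$ elements to within an $\Theta(\epsilon u)$-size set (which is all the naive encoding argument gives, and is consistent with all accepting sets being as large as $\Theta(\epsilon u)$). The approach I would pursue is an adaptive adversary: at checkpoint $j$ it extends the prefix it has already forced with elements drawn from the part of the universe the structure has ``not yet ruled out,'' aiming to drive the structure into a state that at some later checkpoint either has a false negative or exceeds false-positive rate $\epsilon$; the slack available to avoid this at checkpoint $j$ is only $\epsilon u$ minus what has already been committed at checkpoints $<j$, so along the way the structure is forced to sharpen its effective resolution by a $\log\log n$ factor at a constant fraction of the checkpoints. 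Extracting from this a clean, structure-agnostic statement that a constant fraction of the $n$ elements is necessarily encoded at resolution $\epsilon/\mathrm{polylog}(n)$ is the only genuinely new ingredient; everything else is entropy bookkeeping and the choice of $n$ inside $(u^{\delta},\epsilon u)$ so that both $n\log(1/\epsilon)$ and $n\log\log n$ are $\omega(n)$ and dominate the error terms.
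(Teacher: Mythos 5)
Your setup is on the right track --- you correctly reduce to a deterministic structure with accepting set of measure $O(\epsilon)$, you correctly recover the $(1-o(1))n\log(1/\epsilon)$ term by the standard encoding, and you correctly intuit that the extra $\Omega(n\log\log n)$ term should come from the structure being simultaneously on the hook at a geometric ladder of sizes. But the proposal stalls exactly where you say it does: the ``budget accounting'' against an arbitrary online structure is left as a sketch of an adaptive adversary that you never close, and that adversary route is not what makes the bound go through. You have therefore not proved the statement; the genuinely new ingredient you flag as missing is, in fact, missing.

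The ingredient the paper uses to bridge this gap is a monotonicity/pigeonhole argument combined with compression, not an adversary. Because there are no false negatives and the structure has no memory of which elements it has actually seen, the accepting set $\hat S_i$ can only grow along the insertion sequence, and its measure stays below $4\epsilon$. Partitioning the sequence into blocks $C_i$ of geometrically increasing size $\gamma^i$, there are $\Theta(\log_\gamma(1/\alpha))$ checkpoints with $|S_i|\in[\alpha n,n]$, so by pigeonhole some increment $\mu(\hat S_i)-\mu(\hat S_{i-1})$ is at most $4\epsilon/(\log_\gamma(1/\alpha)-2)$. The compression then bites at \emph{that} checkpoint: apart from a negligible fraction that were already false positives of $\hat S_{i-1}$, every element of $C_i$ lies in the small slab $\hat S_i\setminus\hat S_{i-1}$, so given the data-structure state each such element can be described in roughly $\log u+\log\epsilon-\log\log_\gamma(1/\alpha)+O(1)$ bits. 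Counting against the information-theoretic lower bound on describing random sequences forces the state at that checkpoint to carry $\log(1/\epsilon)+\Omega(\log\log_\gamma(1/\alpha))$ bits per element of $C_i$, and choosing $\alpha=1/\sqrt n$, $\gamma=2^{(\log n)^\eta}$ turns $\log\log_\gamma(1/\alpha)$ into $\Theta(\log\log n)$. Your plan of trying to show that ``a constant fraction of elements is encoded at resolution $\epsilon/\mathrm{polylog}(n)$'' is both harder than necessary (the paper only needs a single good checkpoint) and not established by an adaptive adversary in your sketch; the pigeonhole on the monotone measure is what makes the argument structure-agnostic and is the step you should supply.
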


In particular, Theorem \ref{Thm:LowerBoundInformal} states that the average number of bits per element must be at least $(1 - o(1))\log(1/\epsilon) + \Omega(\log\log n)$ at some point in time while processing a not-too-short sequence. We emphasize that in many practical scenarios (see \cite{BroderM03}) a typical false positive rate is a not-too-small constant (e.g., $\epsilon = 1/10$). For such a range of parameters our lower bound states that the average number of bits per element must be $\Omega(\log \log n)$ as opposed to constant.

We then show that our lower bound is asymptotically tight by presenting two constructions with a space usage that matches our lower bound up to additive lower order terms. We prove the following theorem:

\begin{theorem}[Upper bound -- informal]\label{Thm:UpperBoundInformal}
There exists a data structure for approximate membership for sets of unknown sizes with false positive rate $\epsilon$ that uses space $(1+o(1)) n\log(1/\epsilon) \allowbreak +  O(n\log\log n)$ bits for any sequence of $n>u^{\delta}$ insertions, for any arbitrary small constant $0 < \delta < 1$.
\end{theorem}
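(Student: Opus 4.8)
The plan is to maintain a \emph{hierarchy of fingerprint filters}, one per ``epoch,'' where epoch $i$ is responsible for exactly those elements that are inserted while the number of elements seen so far lies in the range $[2^{i-1},2^{i})$; in particular epoch $i$ receives at most $n_i:=2^{i-1}$ elements, and this bound is known the moment the epoch opens, even though the eventual size $n$ never is. The crucial idea is to give epoch $i$ its own diminishing false-positive budget $\epsilon_i:=c\,\epsilon/(i+1)^2$, where $c$ is a constant making $\sum_{i\ge 0}\epsilon_i\le\epsilon$; this series converges no matter how long the insertion sequence turns out to be, which is exactly what lets us dispense with advance knowledge of $n$, and --- as the accounting below shows --- is also precisely what forces the additive $\Theta(n\log\log n)$ term. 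Concretely, for epoch $i$ we draw a (say, pairwise independent) hash function $\phi_i\colon U\to[n_i/\epsilon_i]$ and maintain a dynamic \emph{succinct} dictionary $D_i$ holding the fingerprint set $\phi_i(S_i)$, where $S_i$ is the set of elements inserted during epoch $i$, rebuilt in the usual way so that its space is always proportional to the number of fingerprints it currently stores. An $\Insert(x)$ computes the current epoch $i$ and adds $\phi_i(x)$ to $D_i$, opening epoch $i+1$ once the count reaches $2^{i}$; a $\Membership(x)$ query returns $\Yes$ iff $\phi_i(x)\in D_i$ for some currently active epoch $i$.

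Correctness is immediate: every $x\in S$ lies in some $S_i$, so $\phi_i(x)\in D_i$ and there are no false negatives; and for $y\notin S$, a union bound over the active epochs gives $\Pr[\text{false positive}]\le\sum_i\Pr[\phi_i(y)\in\phi_i(S_i)]\le\sum_i |S_i|\cdot\epsilon_i/n_i\le\sum_i\epsilon_i\le\epsilon$, using $|S_i|\le n_i$. For the space bound the decisive point is to use a genuinely succinct dictionary --- a size-$m$ subset of a universe of size $n_i/\epsilon_i$ can be stored in $\log\binom{n_i/\epsilon_i}{m}\le m\log(1/\epsilon_i)+O(m)$ bits, up to a $(1+o(1))$ factor --- rather than a Bloom filter, which would blow up the leading term by a $\log e$ factor. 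Since $\sum_i|S_i|=n$, $|S_i|\le n_i=2^{i-1}$, and $\log(1/\epsilon_i)=\log(1/\epsilon)+O(\log(i+1))$, summing the dictionaries gives
\[
\sum_i\mathrm{space}(D_i)\ \le\ (1+o(1))\sum_i |S_i|\log(1/\epsilon_i)+O(n)\ =\ (1+o(1))\,n\log(1/\epsilon)+O(n\log\log n),
\]
where the last equality uses that $\sum_i 2^{i-1}\log(i+1)$ is a geometric-type sum dominated by its top term $\Theta(n\log\log n)$, and that the assumption $n>u^{\delta}$ lets us absorb the $\mathrm{poly}(\log u)$ bits describing all the hash functions into $o(n)$ and replace $\log\log u$ by $O(\log\log n)$.

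The main obstacle is that one cannot ``grow by rebuilding'' the way an ordinary dynamic dictionary would: the data structure never stores the original elements, so a fingerprint that was computed for a short target length cannot later be re-extended. The epoch decomposition sidesteps this by \emph{freezing} every completed epoch forever and only ever growing the current one --- at the cost of $\Theta(\log n)$-time queries. To also obtain the constant-time membership and amortized constant-time insertions promised alongside this space bound, one would collapse the family $\{D_i\}$ into a single quotient-filter-style hash table on $\Theta(n)$ slots (doubled as $n$ grows), storing with each element a short remainder together with an $O(\log\log n)$-bit tag identifying its epoch, so that a query touches only one $O(1)$-sized bucket; the space accounting is unchanged up to the $O(n\log\log n)$ term, since both the epoch tags and the table's $O(1)$-per-element overhead fit inside it. I expect the genuinely delicate parts there to be making the table-doubling step preserve the per-epoch false-positive guarantees without ever re-reading the elements, and bounding all bucket sizes by $O(1)$ with high probability; the space bound claimed in the theorem, however, already follows from the plain frozen hierarchy.
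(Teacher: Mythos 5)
Your construction is essentially the paper's Construction~1 (the warm-up in Section~4): the same geometric epoch decomposition, the same diminishing per-epoch error budgets $\epsilon_i=\Theta(\epsilon/i^2)$ summed by a union bound, and the same instantiation via succinct dictionaries holding pairwise-independent fingerprints, yielding the identical $(1+o(1))\,n\log(1/\epsilon)+O(n\log\log n)$ accounting with the $\log\log n$ term coming from $\log(1/\epsilon_i)=\log(1/\epsilon)+O(\log i)$ in the top epochs. Your closing sketch of a constant-time variant only loosely resembles the paper's Construction~2, which instead maintains a single dictionary at a time and extends fingerprints during epoch transitions using $\lceil\log\log u\rceil$ buffered hash bits to keep the blow-up linear; but as you correctly note, the theorem's space claim already follows from the frozen hierarchy alone.
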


Our first construction (which can be viewed as a warm-up) is quite natural and uses a sequence of dynamic approximate membership data structures of geometrically-increasing sizes. It supports insertions in expected amortized constant time, but membership queries are supported in time $O(\log n)$. Our second construction is significantly more subtle, showing that in fact our space lower bound can be matched by a highly efficient data structure supporting membership queries in constant time in the worst case with high probability (while still enjoying expected amortized constant insertion time as in our first construction). Moreover, we show that it can be ``de-amortized'' to support also insertions in constant time in the worst case with high probability by increasing its space usage from $(1+o(1)) n\log(1/\epsilon) +  O(n\log\log n)$ bits to $O(n\log(1/\epsilon) +  n\log\log n)$ bits (with a rather small leading constant). We refer the reader to Section \ref{Subsec:Overview} for a high-level overview of the main ideas underlying our lower bound and constructions.

Finally, we note that in both our lower bound and constructions we consider approximate representation of sets whose size $n$ is polynomially related to the universe size $u$ (i.e., $n>u^{\delta}$ for any arbitrary small constant $0 < \delta < 1$). This is rather standard for exact or approximate representation of sets as one can always apply a universe reduction via simple universal hashing given {\em any} polynomial upper bound on the number of elements.

%

\subsection{Related work}\label{SubSec:RelatedWork}


\paragraph{Bloom filters.}
The elegant data structure proposed by Bloom~\cite{Bloom70} naturally allows dynamic insertions, but uses space that is a factor $\log e\approx 1.44$ larger than the information theoretic lower bound of $n\log(1/\epsilon)$ bits~\cite{CarterFGMW78}.
Another thing to notice is that Bloom filters do not allow deletions from $S$, as setting any bit to $0$ could result in false negatives.

Deletion queries can be supported by using {\em counting\/} Bloom filters~\cite{FanCAB00}, at the cost of an $\Omega(\log\log n)$ factor  increase in space usage.
Deletions are supported in the sense that the data structure will work correctly if no attempt is made to delete a false positive, but by definition it is not possible to prevent such deletions.
Cohen and Matias~\cite{CohenM03} present a way of decreasing the space overhead to $O(n)$ bits, and generalize approximate membership to approximate multiplicity in a multiset.

\paragraph{Dictionary-based approximate membership.}
Already in 1978, Carter et al.~\cite{CarterFGMW78} had presented a technique that would lead to a similar result.
They observed that maintaining the multiset $h(S)$, where $h: [u]\rightarrow [n/\epsilon]$ is a universal hash function~\cite{CarterW79}, yields a solution with space $n\log(1/\epsilon)+O(n)$ bits if the set $h(S)$ is stored in space close to the information theoretic bound of $\log\binom{n/\epsilon+n}{n}$ bits.
If deletions are not needed it suffices to store the set of distinct hash values $h(S)$.
This dynamic set can be stored succinctly with all operations taking $O(1)$ time with high probability~\cite{ArbitmanNS10}.
Dynamic multisets, and thus deletions, can be supported via a reduction to the standard membership problem~\cite{PaghPR05}, at the cost of amortized expected update bounds.
A more practical alternative was explored in~\cite{BonomiMPSV06}.

\paragraph{Separation of on-line and off-line space requirements.}
Dietzfelbinger and Pagh~\cite{DietzfelbingerP08} showed how to approach the $n\log(1/\epsilon)$
space lower bound up to a $o(n)$ term using query time $\omega(\log(1/\epsilon))$, in the case where $\epsilon$ is an integer power of $2$.
Independently, Porat~\cite{Porat09} achieved the same result with constant query time. Recently, Bellazougui and Venturini~\cite{BelazzouguiV13} showed how to eliminate the restriction on $\epsilon$, still maintaining constant query time.

Lovett and Porat~\cite{LovettP10} showed that these results for the static case do not extend to the situation where dynamic updates are allowed:
An overhead of $\Omega(n/\log(1/\epsilon))$ bits is required.
The lower bound holds even if there are no queries before the end of the insertion sequence.
In other words, this result implies that to build an approximate membership data structure for a key set given as a data stream, it does not suffice to use space close to the static size lower bound.

\paragraph{Dynamic space usage.}
The setting where space must depend on the current size of the set is more demanding from an upper bound perspective.
In fact, the techniques for this problem that we could find in the literature~\cite{AlmeidaBPH07, HaoKL08, GuoWCL06, GuoWCYL10, WeiJZF11, WeiJZF13} lead to $\Omega(\log n)$ or $\Omega(\log(1/\epsilon))$ query time, and a space overhead of $\Omega(n\log n)$ bits.
These data structures share the idea of working with a sequence of approximate membership data structures, all of which are queried.
If geometrically increasing capacity is chosen this means that there will be $\Omega(\log n)$ such data structures (of course, if we have some initial capacity $n_0$ this number decreases to $O(\log(n/n_0))$, which might be fine in practical situations -- but it is not asymptotically optimal).
A consequence of working with a series of approximations is that the sum of corresponding false positive rates $\epsilon_1,\epsilon_2,\dots$ must converge to~$\epsilon$. For example, In~\cite{AlmeidaBPH07} it is suggested to achieve this by letting~$\epsilon_i$ decrease geometrically with $i$.
This implies that $\epsilon_i = n^{-{\Omega(i)}}$, yielding $\Omega(n\log n)$ space usage.

\paragraph{Dynamic perfect hashing and retrieval.}
An approach to approximate membership in the static case is to store a perfect hash function that maps keys injectively to $\{1,\dots,n\}$, and then store a signature of $\log(1/\epsilon)$ bits for each key in an array, placed according to the perfect hash function.
More generally, a dynamic data structure for retrieval (e.g. the {\em Bloomier filters\/} of~\cite{ChazelleKRT04}) allows us to make a dynamic approximate membership data structure.
As shown in~\cite{MortensenPP05} both these problems require space $\Theta(n\log\log n)$ in the on-line setting.
However, the upper bounds have a fixed space usage (up to constant factors), and hence do not allow the kind of result we obtain.

\subsection{Overview of Our Contributions}\label{Subsec:Overview}

In this section we provide a high-level overview of the main ideas underlying our lower bound and constructions.

\paragraph{The lower bound: From Approximate membership to compression.} When dealing with dictionaries (i.e., with {\em exact} membership as opposed to {\em approximate} membership), it is quite simple to deal with the fact that the size of the set $S$ to be stored is not known in advance. Specifically, at any point in time a dictionary stores a description of the set $S$ of elements that were inserted so far. Then, upon inserting a new element $x \in U \setminus S$ this description can simply be updated to that of $S'  = S \cup \{x\}$. The dictionary can describe $S$ and then $S'$ using the minimal, information-theoretic, number of bits. Moreover, there are even time-efficient solutions that gradually increase the size of the dictionary while offering constant-time operations in the worst case together with an asymptotically-optimal space consumption at any point in time (see, for example, the work of Dietzfelbinger and auf der Heide \cite{DietzfelbingerMadH90}).

When dealing with {\em approximate} membership, however, it seems significantly more challenging when the size of the set $S$ to be approximated is not known in advance. For simplifying the following discussion we consider here {\em deterministic} approximate membership data structures, but note that the exact same ideas carry over to randomized ones. Specifically, for using asymptotically optimal space, an approximate membership data structure cannot afford to store an exact description of the set $S$ of elements that were inserted so far. Instead, any particular state of the data structure may be used for many sets other than $S$, and the result of the $\Membership$ algorithm must be $\Yes$ on any element that belongs to the union $\hat{S}$ of these sets.
Upon inserting a new element $x \in U \setminus S$, the data structure has to update the description of the current superset $\hat{S}$ to the description of some superset $\hat{S'}$ of $S' = S \cup \{x\}$. Note, however, that the data structure does not have access to the set $S$, but only to the approximation $\hat{S}$ containing $S$. Therefore, it must hold that $\hat{S} \subseteq \hat{S'}$ as 
any element of $\hat{S}$ might have been inserted, and false negatives are not allowed.

The main observation underlying our lower bound proof is that not only the new superset $\hat{S'}$ has to be larger than the old one $\hat{S}$ (as $\hat{S} \subseteq \hat{S'}$), but it actually has to be significantly larger. That is, upon the insertion of an element, the data structure must update its internal state by adding {\em many} elements to the currently stored superset. This is in contrast to the setting of {\em exact} membership discussed above, where upon the insertion of an element, a dictionary can update its internal state by adding only the newly added element to the currently stored set.  We formalize this observation via a compression argument showing that if $\hat{S'} \setminus \hat{S}$ is rather small, then we can ``compress'' the set $S'$ below the information-theoretic lower bound. We note that this argument takes into account only space utilization, and does not need to make any assumptions on the efficiency of the data structure in terms of the time complexity of its $\Insert$ and $\Membership$ algorithms. We refer the reader to Section \ref{Sec:LowerBound} for the proof of our lower bound.

\paragraph{Construction 1 (warm-up): Geometrically-increasing data structures.} Our first construction is quite natural and uses a sequence of dynamic approximate membership data structures of geometrically-increasing sizes. When viewing the sequence of inserted elements as consecutive subsequences, where the $i$th subsequence consists of $2^i$ elements, at the beginning of the $i$th subsequence we allocate a dynamic approximate membership data structure $\mathcal{B}_i$ with a false positive rate $\epsilon_i = \Theta (\epsilon / i^2)$. The elements of the $i$th subsequence are processed by $\mathcal{B}_i$. A membership query for an element $x \in U$ is performed by invoking the membership algorithm of each of the existing data structures $\mathcal{B}_i$, and reporting $\Yes$ if any of them does. Clearly, the construction has no false negatives, and its false positive rate is at most $\sum_{i = 1}^{\infty} \epsilon_i \leq \epsilon$.

By carefully instantiating the underlying $\mathcal{B}_i$'s with existing dynamic approximate membership data structures, for any sequence of $n$ insertions the data structure uses only $(1+o(1)) n \log(1/\epsilon) + O(n \log \log n)$ bits of space, and insertions are performed in expected amortized constant time. However, membership queries require time $\Theta(\log n)$ after $n$ insertions, as a separate membership query is needed for each of the existing $\mathcal{B}_i$'s.\footnote{We note that these $\Theta(\log n)$ membership queries can be executed in parallel.} We refer the reader to Section \ref{Sec:Construction1} for more details.

\paragraph{Construction 2: Constant-time operations.} Whereas our first construction is somewhat naive, our second construction is significantly more subtle, supporting membership queries in constant time in the worst case with high probability (while still enjoying expected amortized constant insertion time as in our first construction). Moreover, we show that it can be ``de-amortized'' to support also insertions in constant time in the worst case by increasing its space usage from $(1+o(1)) n\log(1/\epsilon) +  O(n\log\log n)$ bits to $O(n\log(1/\epsilon) +  n\log\log n)$ bits (with a rather small leading constant).

Unlike our first construction, this construction consists of only one data structure at any point in time. This data structure is a dynamic dictionary (i.e., an exact representation of a set) that is used for storing a carefully chosen superset of the elements that were inserted so far. For describing the main ideas underlying this construction, we again view the sequence of inserted elements as consecutive subsequences, where the $i$th subsequence consists of $2^i$ elements. The construction is initialized by sampling a function $h : U \rightarrow \{0,1\}^{\ell}$ from a pairwise independent collection of functions, where $\ell \ge \lceil \log(1/\epsilon) \rceil + \log u + 2$ (recall that $\epsilon$ is the required false positive rate, and that $u$ is that size of the universe of elements).

The basic idea is that for inserting an element $x$ as part of the $i$th subsequence, we store in the current dictionary $\mathcal{D}_i$ the value $h_i(x)$ that is defined as the leftmost $\ell_i = \lceil \log(1/\epsilon) \rceil + i + 2$ bits of $h(x)$. At the end of the $i$th subsequence, we transition from the current dictionary $\mathcal{D}_i$ to a newly allocated dictionary $\mathcal{D}_{i+1}$, and de-allocate the space used by $\mathcal{D}_i$. The transition is performed as follows: As $\mathcal{D}_i$ is a dictionary, we can enumerate all of its stored values, and for each such value $y \in \{0,1\}^{\ell_i}$ we insert both $y0 \in \{0,1\}^{\ell_{i+1}}$ and $y1 \in \{0,1\}^{\ell_{i+1}}$ to the new dictionary $\mathcal{D}_{i+1}$. Note that $\mathcal{D}_i$ stores $\ell_i$-bit values, and $\mathcal{D}_{i+1}$ stores $\ell_{i+1}$-bit values. The key point is that at any point in time there is only one dictionary $\mathcal{D}_i$, and therefore any membership query requires executing only one such query: Given an element $x$ and given that the current dictionary is $\mathcal{D}_i$, we execute a membership query for $h_i(x)$ in $\mathcal{D}_i$. Therefore, the time for supporting membership queries is identical to that of the underlying dictionaries.

This approach, however, needs to be refined as the number of stored values increases too fast. To match our lower bound, we would like to argue that each dictionary $\mathcal{D}_i$ stores $O(2^i)$ values. This is not the case: Each of the $2^j$ elements that are inserted as part of the $j$th subsequence, for any $j < i$, ``contributes'' $2^{i - j}$ values to $\mathcal{D}_i$, and therefore the number of values stored by $\mathcal{D}_{\log n}$ would be $O(n \log n)$ instead of $O(n)$.

We resolve this difficulty as follows. For inserting an element $x$ as part of the $i$th subsequence, we store in $\mathcal{D}_i$ the pair $(h_i(x), g_i(x))$, where $h_i(x)$ is defined as the leftmost $\ell_i = \lceil \log(1/\epsilon) \rceil + i + 2$ bits of $h(x)$ (as before), and $g_i(x)$ is defined as the next $r = \lceil \log \log u \rceil $ output bits of $h(x)$ (padded with the symbol $\bot$ when less than $r$ such bits are available)\footnote{Such a pair $(h_i(x), g_i(x))$ is inserted using $h_i(x)$ as its key, for enabling constant-time membership queries.}. The transitioning from $\mathcal{D}_i$ to $\mathcal{D}_{i+1}$ is now performed as follows: For each pair $(y, \alpha_1 \cdots \alpha_r) \in \{0,1\}^{\ell_{i}} \times \{0,1, \bot\}^r$ that is stored in $\mathcal{D}_i$ we insert to $\mathcal{D}_{i+1}$ either the pair $(y \alpha_1, \alpha_2 \cdots \alpha_r \bot)$ if $\alpha_1 \neq \bot$ (using $y \alpha_1$ as its key), or the two pairs $(y0, \alpha)$ and $(y1, \alpha)$ if $\alpha_1 = \bot$ (using $y0$ and $y1$ as the respective keys). This way, each of the $2^j$ elements that are inserted as part of the $j$th subsequence, for any $j < i$, ``contributes'' only $2^{i - j - r}$ values to $\mathcal{D}_i$, which guarantees that each $\mathcal{D}_i$ stores only $O(2^i)$ values. This, combined with a standard bucketing argument, enables us to match our space lower bound of using $(1+o(1)) n\log(1/\epsilon) \allowbreak +  O(n\log\log n)$ bits for any sequence of $n$ insertions. Note that this method has no false negatives, and we show that our choice of parameters guarantees that the false positive rate is at most $\epsilon$. Moreover, the time for supporting membership queries is identical to that of the underlying dictionaries.

The construction enjoys a good amortized insertion time: Most insertions correspond to standard insertions for the current $\mathcal{D}_i$, while only a small number of insertions require transitioning from $\mathcal{D}_i$ to $\mathcal{D}_{i+1}$. Specifically, we show that if the underlying dictionaries support insertions in expected amortize constant time, then so does our construction. Moreover, we also show that if the underlying dictionaries offer a constant insertion time in the worst case with high probability, then our construction can be modified to offer constant time insertions in the worst case with high probability. This follows the de-amortization technique of Arbitman, Naor and Segev \cite{ArbitmanNS09, ArbitmanNS10}, and only increases the space usage from $(1+o(1)) n\log(1/\epsilon) +  O(n\log\log n)$ bits to $O(n\log(1/\epsilon) +  n\log\log n)$ bits (with a rather small leading constant).

Finally, we also show that our construction can even support deletions, as long as no false positives are deleted.  We refer the reader to Section \ref{Sec:Construction2} for more details.

\subsection{Paper Organization}

The remainder of this paper is organized as follows. In Section \ref{Sec:Preliminaries} we present some essential preliminaries. In Section \ref{Sec:LowerBound} we prove our lower bound. In Sections \ref{Sec:Construction1} and \ref{Sec:Construction2} we present our constructions. Finally, in Section \ref{Sec:FutureResearch} we discuss various directions for future research.

\section{Preliminaries}\label{Sec:Preliminaries}

\paragraph{Notation.} For an integer $n \in \mathbb{N}$ we denote by $[n]$
the set $\{1, \ldots, n\}$. For a random variable $X$ we denote by $x \leftarrow X$
the process of sampling a value $x$ according to the distribution of $X$.
Similarly, for a finite set $S$ we denote by $x \leftarrow S$ the process
of sampling a value $x$ according to the uniform distribution over $S$.

\paragraph{Computational model.} We consider the unit cost RAM model in which the elements are taken from a
universe of size $u$, and each element can be stored in a single word of length $w = \lceil \log u
\rceil$ bits. Any operation in the standard instruction set can be executed in
constant time on $w$-bit operands. This includes addition, subtraction, bitwise Boolean operations,
left and right bit shifts by an arbitrary number of positions, and multiplication. The unit cost RAM model has been the subject of much research, and is considered the standard model for analyzing the efficiency of data structures (see, for example, \cite{DietzfelbingerP08,Hagerup98,HagerupMP01,Miltersen99,PaghP08,RamanR03} and the references therein).

\paragraph{$\boldsymbol{k}$-Wise independent functions.} A collection $\mathcal{H}$ of functions $h
: U \rightarrow V$ is $k$-wise independent if for any distinct $x_1, \ldots, x_k \in U$ and for any
$y_1, \ldots, y_k \in V$ it holds that
\[ \Pr_{h \leftarrow \mathcal{H}} \left[ h(x_1) = y_1 \wedge \cdots \wedge h(x_k) = y_k \right] = \frac{1}{|V|^k} . \]%
More generally, a collection $\mathcal{H}$ is $k$-wise $\delta$-dependent if for any distinct $x_1,
\ldots, x_k \in U$ the distribution $(h(x_1), \ldots, h(x_k))$ where $h$ is sampled from
$\mathcal{H}$ is $\delta$-close in statistical distance to the uniform distribution over $U^k$.

\section{The Lower Bound: From Approximate Membership to Compression}\label{Sec:LowerBound}

Let $\mathcal{D}$ be an approximate membership data structure for sets of unknown size, for a universe $U$ of size $u$ with a false positive rate $0 < \epsilon < 1$. Our lower bound holds for any such data structure that supports insertions and membership queries. We assume $\mathcal{D}$ has access to a read-only array of random bits at no cost in space, allowing randomized data structures. Since we do not limit the time complexity of the $\Membership$ algorithm, and all possible histories of the data structure can be computed using the random bits array, we may without loss of generality assume that $\mathcal{D}$ answers $\Yes$ on input $x$ exactly when the current state of the data structure is consistent with {\em some\/} history in which $x$ was inserted using the current array of random bits.

In this section we prove a lower bound on the space usage of $\mathcal{D}$ even if the size of the sets to be approximated is known to be in a certain interval (this only strengthens the lower bound). Specifically, we prove the following theorem:

\begin{theorem}\label{thm:lb}
Let $\mathcal{D}$, $U$, and $\epsilon$ be as above, and let $n \leq \epsilon u$ be sufficiently large and $1/\sqrt n \leq \alpha < 1$. If for any sequence of insertions of any length $m$ such that $\alpha n < m < n$, the data structure $\mathcal{D}$ uses at most $\beta m$ bits of space, then for any integer $\gamma \geq 2$ it holds that
\[ \beta \geq \left( 1-\frac{1}{\gamma} \right) \cdot \left( \log(1/\epsilon) + (1-9\epsilon) \log\log_\gamma (1/\alpha)  - \Theta(1) \right) . \]
\end{theorem}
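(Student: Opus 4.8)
The plan is to turn a too-good data structure into an impossible compression scheme for a random set of size roughly $m$ drawn from a carefully chosen sub-universe. The high-level skeleton is the standard incompressibility argument: if we can encode every set $S$ in a family of $N$ sets using fewer than $\log N$ bits on average (over the data structure's randomness and over a uniformly random $S$ from the family), we get a contradiction. So I would (i) fix a good random string for $\mathcal{D}$ so that the false-positive guarantee holds in expectation over the set, (ii) build an encoder that runs $\mathcal{D}$'s \Insert\ on $S$, reads off the $\le \beta m$-bit state, and then (iii) show the decoder can recover $S$ from that state plus a small amount of side information, so $\beta m \gtrsim \log\binom{|\text{sub-universe}|}{m}$, which already gives the $\log(1/\epsilon)$ term; the extra $\log\log_\gamma(1/\alpha)$ term has to come from exploiting the \emph{monotonicity} of the accepted superset $\hat S$ across the insertion sequence.

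**The core new idea** — the part I expect to be the main obstacle — is extracting the $\Omega(\log\log_\gamma(1/\alpha))$ additive gain. The overview tells us the mechanism: when an element is inserted, the accepted set $\hat S$ must grow, and by a compression argument it must grow by \emph{many} elements, not just one. I would make this quantitative by considering the insertion sequence at $\gamma$ geometrically spaced checkpoints between $\alpha n$ and $n$, i.e. lengths $m_0 = \alpha n, m_1 = \gamma m_0, \dots$ up to roughly $n$, giving about $\log_\gamma(1/\alpha)$ checkpoints. At checkpoint $m_j$ the state occupies $\le \beta m_j$ bits and the accepted superset $\hat S_j$ has size at most about $\epsilon u$ (else the false-positive rate fails), but contains the $m_j$ inserted elements; crucially $\hat S_0 \subseteq \hat S_1 \subseteq \cdots$. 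Now I would choose the input set adversarially/randomly so that at \emph{each} checkpoint the newly accepted elements $\hat S_j \setminus \hat S_{j-1}$ must be "paid for" — the encoder describes $S$ by describing, at a well-chosen checkpoint index $j^\star$, which $m$ of the (at most $\epsilon u$, minus the already-forced $|\hat S_{j-1}|$) candidate elements were actually inserted; because roughly $\log_\gamma(1/\alpha)$ nested supersets each sit inside an $\epsilon u$-sized window, a counting/pigeonhole argument over the checkpoints forces at least one of them to be small relative to the number of genuinely new insertions, and that "thin shell" is exactly what yields the $\log\log_\gamma(1/\alpha)$ savings in the encoding. The $(1-9\epsilon)$ factor and the $\Theta(1)$ slack are the price of the union bound over bad random strings and of handling the elements that are "accidentally" accepted.

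**Concretely, the steps in order.** First, restrict to a sub-universe $U' \subseteq U$ of size $\lceil m/\epsilon\rceil$ (legitimate since $n \le \epsilon u$), and let $S$ be a uniformly random $m$-subset of $U'$, so that $\log\binom{|U'|}{m} = m\log(1/\epsilon) + O(m)$; any accepted superset of size $\le \epsilon u \le |U'|$ is itself almost all of $U'$, which is what makes the false-positive constraint bite. Second, by averaging, fix the random string $R$ so that $\Pr_S[\text{false positive at the end}] \le$ something like $9\epsilon$; call a set $S$ "good" for $R$ if its accepted superset at every checkpoint has size $\le (1-\text{tiny})|U'|$ or similar, and note at least a $(1-O(\epsilon))$ fraction of sets are good. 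Third — the heart — design the encoding: run \Insert\ on the good $S$; look at the $\gamma$-or-so checkpoint states; by a pigeonhole over the $\approx\log_\gamma(1/\alpha)$ checkpoints show there is an index where the number of "new" accepted elements since the previous checkpoint is at most $|U'|/\log_\gamma(1/\alpha)$ while the number of new \emph{inserted} elements is $\Theta(m)$ (since insertions are spread geometrically, consecutive checkpoints differ by a constant factor in $m$); encode $S$ as: $(a)$ the checkpoint index $j^\star$ ($O(\log\log_\gamma(1/\alpha))$ bits — this is the term we \emph{lose}, so it must be dominated), $(b)$ the state at $m_{j^\star-1}$ ($\le \beta m_{j^\star-1}$ bits), and $(c)$ the identity of the newly-inserted elements among the thin shell of new accepted elements — which costs only $\log\binom{|U'|/\log_\gamma(1/\alpha)}{\Theta(m)}$, i.e. $\Theta(m)(\log(1/\epsilon) - \log\log_\gamma(1/\alpha) + O(1))$ bits — together with the older inserted elements recovered from $\hat S_{j^\star-1}$. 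Fourth, the decoder: from the state at $m_{j^\star-1}$ it reconstructs (by the paper's WLOG assumption about \Membership) the accepted set $\hat S_{j^\star-1}$ and hence the elements inserted up to then; it then replays, uses part $(c)$ to pin down the remaining inserted elements, and recovers $S$. Finally, combine: $\log\binom{|U'|}{m} \le$ (bits used by the encoder) $\le O(\log\log_\gamma(1/\alpha)) + \beta m_{j^\star-1} + \Theta(m)(\log(1/\epsilon)-\log\log_\gamma(1/\alpha)+O(1))$, and since $m_{j^\star-1} \ge \alpha m / \gamma \cdot(\text{const})$ wait — more carefully $m_{j^\star-1} = m_{j^\star}/\gamma$ and $m_{j^\star}\le n$, rearranging and using $m_{j^\star-1} \le m(1-1/\gamma)^{-1}$-type relations gives $\beta \ge (1-1/\gamma)(\log(1/\epsilon) + (1-9\epsilon)\log\log_\gamma(1/\alpha) - \Theta(1))$.

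**The main obstacle**, to be explicit, is making the pigeonhole in step three actually produce a shell that is simultaneously \emph{thin in accepted elements} and \emph{thick in inserted elements}: the nested supersets $\hat S_0\subseteq\dots$ all live in $U'$ of size $m/\epsilon$, so their sizes can increase by at most $m/\epsilon$ total over all $\log_\gamma(1/\alpha)$ checkpoints, forcing an average shell of size $\le (m/\epsilon)/\log_\gamma(1/\alpha)$ — but we need this for a checkpoint where also $\Theta(m)$ genuinely new elements were inserted, which is automatic from the geometric spacing of the checkpoints, and we must ensure the "accidental" acceptances (elements of $\hat S_{j^\star}\setminus\hat S_{j^\star-1}$ that were not inserted) don't overwhelm the count — this is where the $(1-9\epsilon)$ slack and the goodness of $S$ under the fixed string $R$ are used. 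Everything else (the sub-universe trick, the averaging over randomness, the $\log\binom{\cdot}{\cdot}$ estimates via Stirling) is routine.
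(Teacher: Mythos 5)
Your high-level skeleton (fix the randomness, run the sequence past geometrically spaced checkpoints, pigeonhole for a ``thin shell'' $\hat S_{j}\setminus\hat S_{j-1}$, invoke incompressibility) is indeed the paper's strategy, but your decoder has a fatal gap. You claim it ``reconstructs \ldots the accepted set $\hat S_{j^\star-1}$ and hence the elements inserted up to then.'' The \emph{hence} is false: $\hat S_{j^\star-1}$, even restricted to $U'$, is a strict superset of the inserted prefix containing on the order of $\epsilon|U'|=m$ false positives, and the state gives the decoder no way to tell which members of $\hat S_{j^\star-1}$ were actually inserted. Consequently the map $S\mapsto(j^\star,\ \text{state at }m_{j^\star-1},\ \text{shell encoding})$ is not injective, and the incompressibility count never gets off the ground. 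The paper's proof avoids this precisely by writing every element \emph{outside} the critical chunk $C_i$ explicitly and uncompressed, at $\log u$ bits each, so the decoder can replay $\Insert$ on the known prefix to obtain $\hat S_{i-1}$ (rather than trying to invert the data structure), and the state $b_i$ must then pay for the entire savings earned on $C_i$ alone. A second omission follows from the first: to decode the critical chunk against the shell $\hat S_{j^\star}\setminus\hat S_{j^\star-1}$, the decoder must also know $\hat S_{j^\star}$, i.e.\ the state \emph{after} the chunk; the paper stores that state ($b_i$ bits), whereas you store only the state at $m_{j^\star-1}$, so in your scheme the shell is not even identifiable.

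On a secondary point, your sub-universe trick ($U'$ of size $m/\epsilon$, with $S$ an unordered $m$-subset) is a genuine departure from the paper, which works over ordered sequences in $U^n$. That change costs you once you fix the above and write the non-critical elements explicitly: each such element carries only $\log(|U'|/m)=\log(1/\epsilon)$ bits of information but costs $\log|U'|=\log(1/\epsilon)+\log m$ bits to write down, a $\log m$-per-element overhead that swamps the $\log\log_\gamma(1/\alpha)$ gain you are chasing. You would have to encode the non-critical elements collectively as a set rather than one by one, and be careful about how the shell interacts with that set; the paper sidesteps the whole issue because a uniformly random element of $U$ carries exactly $\log u$ bits, so explicit writing is information-neutral. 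Your handling of the pigeonhole over the $\Theta(\log_\gamma(1/\alpha))$ checkpoints, the $(1-9\epsilon)$ loss from elements already accepted before their insertion, and the $(1-1/\gamma)$ factor from the chunk-to-prefix size ratio all track the paper's corresponding lemmas in spirit, but none of that rescues an encoding that the decoder cannot invert.
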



In particular, by setting $\alpha = 1/\sqrt{n}$ and $\gamma = 2^{(\log n)^{\eta}}$, for some constant $0 < \eta < 1$, we obtain the lower bound of $(1 - o(1))n \log(1/\epsilon) + \Omega(n \log \log n)$ bits that is stated in Theorem \ref{Thm:LowerBoundInformal}. We note that asking that the data structure is space efficient only for sequences of at least $\alpha n$ elements can be viewed as allowing the data structure to process the first $\alpha n$ elements in an off-line manner using arbitrary space (which, again, only strengthens the lower bound). We also note that setting $\alpha = 1$ corresponds to the case where the size $n$ of the set is known in advance.

The proof of Theorem \ref{thm:lb} consists of two parts. In the first part (see Section \ref{Subsec:RandToDet}) we show that it suffices to prove the lower bound for deterministic data structures, where the probability of false positives is taken over the choice of a uniformly sampled element (instead of over the internal randomness of the data structure). This is a standard averaging argument showing that one can fix the randomness of any randomized data structure, without significantly increasing the false positive rate. In the second part (see Section \ref{Subsec:Rand}), we then follow the overview discussed in Section \ref{Subsec:Overview} for proving the lower bound for deterministic data structures.


\subsection{From Randomized to Deterministic Approximate Membership}\label{Subsec:RandToDet}

For any (sufficiently long) string $r \in \{0,1\}^*$, we denote by $\mathcal{B}_r$ the deterministic data structure obtained by fixing $r$ as $\mathcal{B}$'s internal randomness. In addition, for any sequence $S \in U^n$ of $n$ insertions we denote by $\hat{S}_r \subseteq U$ the set of all elements on which the $\Membership$ algorithm of $\mathcal{B}_r$ outputs $\Yes$ after processing the sequence $S$. We note that $S \in U^n$ is an ordered \emph{sequence} of (not necessarily distinct) elements, while $\hat{S}_r$ is a set. When we refer to the elements of $S$ we may abuse notation and treat $S$ as a set. Note that the fact that there are no false negatives guarantees that $S \subseteq \hat{S}_r$ for any $r \in \{0,1\}^*$. Finally, for each such $r$ and $S$ define $\mu(\hat{S}_r) = |\hat{S}_r|/u$, and define
\[ \mathcal{S}_{r, \epsilon} = \left\{S \in U^n :  \mu(\hat{S}_r)\leq 4\epsilon\right\} . \]

The following lemma uses an averaging argument and states that there exists a choice of $r \in \{0,1\}^*$ such that $\mu(\hat{S}_r)$  is rather small for many sequences $S$ (i.e., that the set $\mathcal{S}_{r, \epsilon}$ consists of many sequences).

\begin{lemma}\label{lem:S}
Let $\mathcal{B}$, $u$, $\epsilon$ and $n$ be as above. Then, there exists a string $r^* \in \{0,1\}^*$ such that $|\mathcal{S}_{r^*, \epsilon}| \ge u^n /2$.
\end{lemma}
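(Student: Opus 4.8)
The plan is to use a straightforward averaging (probabilistic) argument over the random string $r$ and over a uniformly random sequence $S \in U^n$. First I would fix notation: let $R$ denote the distribution over internal random strings used by $\mathcal{B}$, and think of $r \leftarrow R$ and $S \leftarrow U^n$ as independent. The key quantity to bound is
\[
\Ex_{r}\Ex_{S}\left[\mu(\hat S_r)\right] = \Ex_{r}\Ex_{S}\left[\frac{|\hat S_r|}{u}\right].
\]
Since $|\hat S_r| = \sum_{x \in U} \mathbf{1}[x \in \hat S_r]$, by linearity of expectation this equals $\frac{1}{u}\sum_{x \in U} \Pr_{r, S}[x \in \hat S_r]$. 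Now the crucial point: for a fixed $x \in U$ and a random sequence $S \leftarrow U^n$, with probability at least $1 - n/u$ the element $x$ does not appear in $S$ at all (union bound over the $n$ positions); conditioned on $x \notin S$, the false-positive guarantee of $\mathcal{B}$ says $\Pr_{r}[x \in \hat S_r \mid S] \le \epsilon$. Hence $\Pr_{r,S}[x \in \hat S_r] \le \epsilon + n/u$, and since the lemma's hypothesis includes $n \le \epsilon u$, this is at most $2\epsilon$. Therefore $\Ex_{r}\Ex_{S}[\mu(\hat S_r)] \le 2\epsilon$.

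Next I would swap the order of expectation and apply Markov's inequality. By Markov on the random variable $\Ex_{S}[\mu(\hat S_r)]$ (a function of $r$ taking values in $[0,1]$ with mean $\le 2\epsilon$), there is a set of $r$'s of probability at least $1/2$ for which $\Ex_{S}[\mu(\hat S_r)] \le 4\epsilon$; pick any $r^*$ in this set. Fixing that $r^*$, apply Markov's inequality again, this time to $\mu(\hat S_{r^*})$ as a random variable over $S \leftarrow U^n$: its mean is $\le 4\epsilon$, wait — I need the threshold $4\epsilon$ in the definition of $\mathcal{S}_{r^*,\epsilon}$ to capture at least half the sequences, so I should instead be slightly more careful and get a mean bound of $2\epsilon$ for the fixed $r^*$. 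To do that, choose $r^*$ from the (probability $\ge 1/2$) set where $\Ex_{S}[\mu(\hat S_{r^*})] \le 4\epsilon$ — actually the cleanest route is: Markov gives $\Pr_r[\Ex_S[\mu(\hat S_r)] > 4\epsilon] < 1/2$, so some $r^*$ has $\Ex_S[\mu(\hat S_{r^*})] \le 4\epsilon$; hmm, that still only yields $\Pr_S[\mu(\hat S_{r^*}) > 8\epsilon] < 1/2$. The fix is to start from the sharper bound $\Ex_r\Ex_S[\mu(\hat S_r)] \le 2\epsilon$ and push the constants through two nested Markov steps so that the final threshold is $4\epsilon$ and the final probability is $\ge 1/2$; e.g. pick $r^*$ with $\Ex_S[\mu(\hat S_{r^*})] \le 4\epsilon$ and then, since we actually want a factor-$2$ Markov at the end, either strengthen the first inequality (using $n \le \epsilon u/2$ if available, or absorbing slack) or simply accept the constant and note the statement is robust to it. In the write-up I would tune the universal constant ``$4$'' in the definition of $\mathcal{S}_{r,\epsilon}$ to whatever makes both Markov applications give the clean ``$\ge u^n/2$'' conclusion, since that definition is internal to this section.

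Concretely, to land exactly on $|\mathcal{S}_{r^*,\epsilon}| \ge u^n/2$: from $\Ex_r\Ex_S[\mu] \le 2\epsilon$, Markov in $r$ gives a set $G$ of $r$'s with $\Pr_r[r \in G] > 1/2$ and $\Ex_S[\mu(\hat S_r)] \le 2\epsilon$ for... no, Markov gives $\Pr_r[\Ex_S \mu > t] < 2\epsilon/t$; taking $t = 4\epsilon$ gives probability $< 1/2$, so $G = \{r : \Ex_S[\mu(\hat S_r)] \le 4\epsilon\}$ is nonempty and we fix $r^* \in G$. Then Markov in $S$: $\Pr_S[\mu(\hat S_{r^*}) > 4\epsilon] \le \Ex_S[\mu(\hat S_{r^*})]/(4\epsilon) \le 1$, which is vacuous. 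So the honest statement needs the threshold in $\mathcal{S}_{r,\epsilon}$ to be a constant strictly larger than $\Ex_S[\mu(\hat S_{r^*})]$, i.e. one must either (a) make the $r$-threshold smaller, say $2.5\epsilon$, forcing $\Pr_r[\cdot] < 0.8$ — still not $<1/2$ — or (b) accept that the clean $1/2$ bound forces the definition's constant to be $\ge 4$ only if the first-stage bound is $\le 2\epsilon$ and we use a $t$ that makes stage one have probability exactly matching; the workable choice is to define $\mathcal{S}_{r,\epsilon}$ with threshold $8\epsilon$ if one insists on $\ge 1/2$ via two factor-$2$ Markov steps from mean $2\epsilon$. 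Since the excerpt has already committed to the constant $4$, the right reading is that the first-stage expectation bound must be pushed down to $\epsilon$ (using $n \le \epsilon u$ more aggressively: $\epsilon + n/u \le \epsilon + \epsilon = 2\epsilon$, still $2\epsilon$) — so in the actual proof I would instead use a one-sided Chebyshev / a direct counting argument on the bipartite incidence structure between sequences and ``bad'' elements rather than two Markov steps, or simply observe that for the final lower-bound theorem only some constant matters. \textbf{The main obstacle} is thus purely bookkeeping of constants across the two averaging steps so that the stated ``$4\epsilon$'' and ``$\ge u^n/2$'' are simultaneously achieved; the conceptual content — that a random query on a random short sequence is a false positive with probability $\le \epsilon + n/u$, hence the expected measure of $\hat S_r$ is small — is immediate and not the hard part.
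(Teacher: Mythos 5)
Your conceptual setup --- the bound $\Ex_{r,S}[\mu(\hat S_r)] \le \epsilon + n/u \le 2\epsilon$ via the per-element union bound, and an averaging step to extract a good $r^*$ --- is the right starting point, and you correctly diagnosed that the route you chose (Markov on $r$ applied to the $S$-averaged quantity $\Ex_S[\mu(\hat S_r)]$, then a second Markov on $S$) pays a factor of $2$ twice and lands at threshold $8\epsilon$ rather than $4\epsilon$. The gap is that you applied Markov to the wrong random variable at the first stage; the paper instead applies Markov to $\mu(\hat S_r)$ itself, once, for each \emph{fixed} sequence $S$.

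Concretely: for any fixed $S \in U^n$ one has $\Ex_r[\mu(\hat S_r)] \le \epsilon + n/u \le 2\epsilon$, so by Markov $\Pr_r[\mu(\hat S_r) \ge 4\epsilon] \le 1/2$. This holds for every $S$, so averaging over $S$ gives $\Pr_{r,S}[\mu(\hat S_r) \ge 4\epsilon] \le 1/2$, equivalently $\Ex_r\bigl[\Pr_S[\mu(\hat S_r) \ge 4\epsilon]\bigr] \le 1/2$. Now there is no second Markov step: since $\Pr_S[\mu(\hat S_r) \ge 4\epsilon]$ has mean at most $1/2$ over $r$, there is some $r^*$ for which it is at most $1/2$, which is exactly $|\mathcal{S}_{r^*,\epsilon}| \ge u^n/2$. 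The ``second step'' is just picking an element at or below the mean and costs no further constant. Thresholding $\mu$ directly (per-$S$) rather than thresholding $\Ex_S[\mu]$ is what lets the $4\epsilon$ in the definition of $\mathcal{S}_{r,\epsilon}$ coexist with the $u^n/2$ bound. You are of course right that the downstream argument is robust to the exact constant, so an $8\epsilon$ version would also serve after adjusting the Chernoff threshold later on --- but the lemma as stated does have a clean one-Markov proof, and that is the one you should use.
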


%
%
%

\begin{proof}
The randomized data structure $\mathcal{B}$ has false positive rate at most $\epsilon$, and therefore for any sequence $S \in U^n$ it holds that
\[ \Ex_{r \leftarrow \{0,1\}^*} \left[ \mu(|\hat{S}_r|) \right] \leq \epsilon + n/u \leq 2 \epsilon . \]%
By Markov's inequality it holds that
\[ \Pr_{r \leftarrow \{0,1\}^*} \left[ \mu(| \hat S_r |)\geq 4 \epsilon \right]\leq \frac{1}{2} . \]%
In particular, there exists an $r^* \in \{0,1\}^*$ for which for at least $1/2$ of all the sequences $S \in U^n$ it holds that $\mu(| \hat S_r |) < 4 \epsilon$.
%
%
\end{proof}

\subsection{A Compression Argument for Deterministic Approximate Membership}\label{Subsec:Rand}

Form this point on focus on the deterministic data structure $\mathcal{B}_{r^*}$, where $r^* \in \{0,1\}^*$ is the internal random string $r^*$ provided by Lemma \ref{lem:S}. In this part of the proof we show that the data structure $\mathcal{B}_{r^*}$ can be used to encode the sequences in a large subset of $\mathcal{S} = \mathcal{S}_{r^*, \epsilon}$. Since Lemma~\ref{lem:S} provides a lower bound on the cardinality of $\mathcal S$, it also provides a lower bound on the length of such an encoding.

Let $S$ be a sequence in $\mathcal S$ and partition it into consecutive subsequences $S=C_1,C_2,...$ such that each $C_i$ consists of $\gamma^i$ elements, where $\gamma \geq 2$ is an integer. We define $S_i$ to be the concatenation of the first $i$ subsequences, and $n_i$ to be its length.  In other words $S_i$ is the prefix of $S$ of length $n_i = \sum_{j\leq i}\gamma^j$. Observe that since there are no false negatives, then $\hat S_i \subseteq \hat S_{i+1}$,\footnote{Recall that, as stated above, we may without loss of generality assume that $\mathcal{D}$ answers $\Yes$ on input $x$ exactly when the current state of the data structure is consistent with {\em some\/} history in which $x$ was inserted using the current array of random bits.} and therefore $\mu(\hat S_i) \leq \mu(\hat S_{i+1}) \leq 4\epsilon$ for every integer $i$.

\begin{lemma}\label{lem:i}
For any sequence $S\in \mathcal{S}$ of length $n$, there exists an integer $i$ such that  $|S_{i}|\in [\alpha n,n]$ and
\[ \mu(\hat S_{i}) - \mu(\hat S_{i-1}) \leq \frac{4\epsilon}{\log_\gamma (1/\alpha)-2} . \]
\end{lemma}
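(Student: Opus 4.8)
The plan is to argue by contradiction using a telescoping-sum (pigeonhole-style) argument over the "jumps" $\mu(\hat S_i) - \mu(\hat S_{i-1})$. First I would identify the range of indices $i$ for which the prefix length $n_i = \sum_{j\le i}\gamma^j$ lies in the interval $[\alpha n, n]$. Since $n_i$ is a geometric sum, $n_i = \frac{\gamma^{i+1}-\gamma}{\gamma-1}$, and the values $n_i$ grow by a factor of roughly $\gamma$ at each step; consequently the number of indices $i$ with $\alpha n \le n_i \le n$ is at least $\log_\gamma(1/\alpha) - O(1)$ — more precisely at least $\log_\gamma(1/\alpha) - 2$, which is where that particular constant in the statement comes from. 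Let $I$ denote this set of admissible indices, so $|I| \ge \log_\gamma(1/\alpha) - 2$.

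Next I would suppose, toward a contradiction, that for every $i \in I$ we have $\mu(\hat S_i) - \mu(\hat S_{i-1}) > \frac{4\epsilon}{\log_\gamma(1/\alpha)-2}$. Summing this strict inequality over all $i \in I$ and using that the jumps are nonnegative (monotonicity $\hat S_{i-1}\subseteq \hat S_i$, hence $\mu(\hat S_{i-1})\le \mu(\hat S_i)$), the partial sum telescopes and is bounded above by $\mu(\hat S_{i_{\max}}) - \mu(\hat S_{i_{\min}-1}) \le \mu(\hat S_{i_{\max}})$, where $i_{\max}$ is the largest admissible index. But $S \in \mathcal{S}$ guarantees $\mu(\hat S_{i_{\max}}) \le 4\epsilon$ (since $\hat S_{i_{\max}}\subseteq \hat S_n = \hat S_r$ and $\mu(\hat S_r)\le 4\epsilon$). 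On the other hand, the sum of $|I|$ copies of a quantity strictly larger than $4\epsilon/(\log_\gamma(1/\alpha)-2)$ is strictly larger than $|I|\cdot \frac{4\epsilon}{\log_\gamma(1/\alpha)-2} \ge 4\epsilon$, a contradiction. Hence some admissible $i$ must violate the assumed lower bound, i.e., satisfies the desired inequality, and by construction $|S_i| = n_i \in [\alpha n, n]$.

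The main thing to be careful about — and what I expect to be the only genuinely fiddly point — is the bookkeeping on the index range: verifying that there really are at least $\log_\gamma(1/\alpha) - 2$ indices $i$ with $n_i \in [\alpha n, n]$, handling the geometric-sum correction terms (the $\gamma/(\gamma-1)$ factors) and the $\alpha \ge 1/\sqrt n$ hypothesis so that the interval is nonempty and wide enough, and making sure the off-by-small-constant slack is absorbed by the "$-2$". Everything else is a routine telescoping/averaging argument. One subtlety worth noting: we need $n$ itself to be expressible as (or at least bounded between consecutive) $n_i$'s, or more simply we just need $\hat S_n \subseteq \hat S_r$, which holds because $S = S_k$ for $k$ the largest index with $n_k \le n$ and monotonicity extends $\mu(\hat S_i) \le 4\epsilon$ to all prefixes; I would state this monotonicity observation explicitly before the main argument.
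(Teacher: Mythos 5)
Your proposal is correct and is essentially the same argument as the paper's: both count the number of admissible indices (the paper explicitly sets $j_1 = \lceil\log_\gamma(\alpha n(\gamma-1)+1)\rceil$, $j_2 = \lfloor\log_\gamma(n(\gamma-1))\rfloor$ and notes $j_2 - j_1 \geq \log_\gamma(1/\alpha)-2$), then apply a pigeonhole/averaging argument to the telescoping sum $\sum_i \bigl(\mu(\hat S_i)-\mu(\hat S_{i-1})\bigr) \leq 4\epsilon$. Your contradiction framing is merely a cosmetic repackaging of the paper's direct "there must exist an $i$ with $\mu(\hat S_i)-\mu(\hat S_{i-1})\leq 4\epsilon/(j_2-j_1)$" step.
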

\begin{proof}
Let $j_1 = \ceil{\log_\gamma (\alpha n(\gamma - 1)+1)}$ and $j_2 = \floor{\log_\gamma( n(\gamma -1))}$. For every $j_1\leq i \leq j_2$ it holds that $n_i \in  [\alpha n,n]$. Since $\mu(\hat S_{j_1}) \geq 0$ and $\mu(\hat S_{j_2}) \leq 4\epsilon$, and since for all $i$ it holds that $\mu(\hat S_i) \leq \mu(\hat S_{i+1})$, there must be an $i \in [j_1,j_2]$ such that
\[ \mu(\hat S_{i}) - \mu(\hat S_{i-1}) \leq \frac{4\epsilon}{j_2-j_1} \leq \frac{4\epsilon}{\log_\gamma (1/\alpha)-2} . \]%
\end{proof}

Fix a sequence $S$ of length $n$, let $i$ be the smallest integer that satisfies the condition in Lemma~\ref{lem:i}, and let $k = |C_{i}\cap \hat S_{i-1}|$. That is, $k$ is the number of elements from the subsequence $C_i$ for which the $\Membership$ algorithm already answers $\Yes$ right before the $i$th subsequence $C_i$ is processed by the $\Insert$ algorithm. Observe that since the data structure is deterministic, $k=k(S)$ is completely determined by the sequence $S$.  We are interested in the case $k\leq 9\epsilon |C_i|$. In the next lemma we show that for most sequences in $\mathcal S$ this is indeed the case.

\begin{lemma}
It holds that
\begin{align}\label{lem:count}
\left| \{S\in \mathcal{S} : k(S)\leq 9\epsilon |C_i|\} \right| \geq \frac{u^n}{3} .
\end{align}
\end{lemma}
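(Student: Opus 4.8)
The plan is to combine the bound $|\mathcal{S}|\ge u^n/2$ from Lemma~\ref{lem:S} with a counting argument showing that the ``bad'' sequences, those with $k(S) > 9\epsilon|C_i|$, number at most $u^n/6$; subtracting then gives the claimed $u^n/2 - u^n/6 = u^n/3$. The one genuine obstacle is that the index $i=i(S)$ is selected in a way that depends on all of $S$, so we cannot simply fix it and count. I would get around this by passing to a larger set defined by an \emph{existential} quantifier over the bounded range $[j_1,j_2]$ of candidate indices appearing in the proof of Lemma~\ref{lem:i} (recall $n_i\in[\alpha n,n]$ for $j_1\le i\le j_2$ and $i(S)\in[j_1,j_2]$ always). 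Concretely, let $\mathcal{E}$ be the set of $S\in U^n$ for which there exists some $i\in[j_1,j_2]$ with $\mu(\hat S_{i-1})\le 4\epsilon$ and $|C_i\cap\hat S_{i-1}| > 9\epsilon\gamma^i$, where $C_i$ and $\hat S_{i-1}$ are viewed as functions of $S$ and $\hat S_{i-1}$ depends only on the length-$n_{i-1}$ prefix of $S$. I would first verify the inclusion $\{S\in\mathcal{S}:k(S)>9\epsilon|C_i|\}\subseteq\mathcal{E}$: for such an $S$ the chosen index $i=i(S)$ lies in $[j_1,j_2]$, and since $S_{i-1}$ is a prefix of $S$ the absence of false negatives gives $\hat S_{i-1}\subseteq\hat S_{r^*}$, hence $\mu(\hat S_{i-1})\le\mu(\hat S_{r^*})\le 4\epsilon$ because $S\in\mathcal{S}$, while $|C_i\cap\hat S_{i-1}|=k(S) > 9\epsilon\gamma^i = 9\epsilon|C_i|$. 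So it suffices to prove $|\mathcal{E}|\le u^n/6$.

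To bound $|\mathcal{E}|$ I would union-bound over the at most $j_2\le n$ values of $i$. For a fixed $i$, factor a sequence as $S=(p,C,q)$ with $p\in U^{n_{i-1}}$, $C\in U^{\gamma^i}$ (the block $C_i$), and $q\in U^{n-n_i}$; the two conditions defining $\mathcal{E}$ touch only $p$ and $C$. Fixing any prefix $p$ with $\mu(\hat p)\le 4\epsilon$, i.e.\ $|\hat p|\le 4\epsilon u$, the number of choices of $C$ with more than $9\epsilon\gamma^i$ of its coordinates landing in $\hat p$ is at most $\sum_{t>9\epsilon\gamma^i}\binom{\gamma^i}{t}|\hat p|^t u^{\gamma^i-t}\le u^{\gamma^i}\sum_{t>9\epsilon\gamma^i}\binom{\gamma^i}{t}(4\epsilon)^t = u^{\gamma^i}\cdot\Pr[\mathrm{Bin}(\gamma^i,4\epsilon)>9\epsilon\gamma^i]$. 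Multiplying by the at most $u^{n_{i-1}}$ choices of $p$ and the $u^{n-n_i}$ choices of $q$, and using $n_{i-1}+\gamma^i+(n-n_i)=n$, the union bound yields $|\mathcal{E}|\le u^n\sum_{i=j_1}^{j_2}\Pr[\mathrm{Bin}(\gamma^i,4\epsilon)>9\epsilon\gamma^i]$.

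It remains to see this sum is negligible. For $i\ge j_1$ we have $n_i\ge\alpha n$, and since $n_{i-1}=\sum_{j\le i-1}\gamma^j<\gamma^i$ (as $\gamma\ge 2$) we get $\gamma^i=n_i-n_{i-1}>n_i/2\ge\alpha n/2\ge\sqrt n/2$, using $\alpha\ge 1/\sqrt n$. Since $9\epsilon\gamma^i=\tfrac94\cdot(4\epsilon\gamma^i)$ is a fixed constant factor above the mean $4\epsilon\gamma^i$, a multiplicative Chernoff bound gives $\Pr[\mathrm{Bin}(\gamma^i,4\epsilon)>9\epsilon\gamma^i]\le c^{\,4\epsilon\gamma^i}\le c^{\,2\epsilon\sqrt n}$ for an absolute constant $c<1$ (one may take $c=e^{5/4}/(9/4)^{9/4}<0.57$). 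Hence $|\mathcal{E}|\le u^n\cdot n\cdot c^{\,2\epsilon\sqrt n}$, and the factor $n\,c^{\,2\epsilon\sqrt n}$ drops below $1/6$ once $n$ is large enough, which is exactly what the ``sufficiently large $n$'' hypothesis provides (the threshold may depend on $\epsilon$; if $\epsilon\ge 1/4$ the term $(1-9\epsilon)\log\log_\gamma(1/\alpha)$ of the theorem is nonpositive and the bound is trivially attained, so we may assume $\epsilon<1/4$, which is also what makes $\mathrm{Bin}(\gamma^i,4\epsilon)$ legitimate). Combining $|\mathcal{E}|\le u^n/6$ with the inclusion $\{S\in\mathcal{S}:k(S)>9\epsilon|C_i|\}\subseteq\mathcal{E}$ and $|\mathcal{S}|\ge u^n/2$ gives the claim. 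The only delicate point, as noted, is the $S$-dependence of the index $i$; once that is absorbed by the existential relaxation to $\mathcal{E}$ together with the three-block factorization of the sequence, the rest is a routine binomial tail estimate.
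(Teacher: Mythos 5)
Your overall strategy mirrors the paper's: a union bound over the bounded range $[j_1,j_2]$ of candidate indices combined with a Chernoff-type concentration estimate for each fixed $i$; the paper does this by arguing that, with probability $1-1/n$ over a uniformly sampled $S$, \emph{every} relevant $j$ satisfies $|C_j\cap\hat S_{j-1}|<9\epsilon|C_j|$, while you make the $S$-dependence of the selected index and the independence of $C_i$ from $\hat S_{i-1}$ fully explicit via the existential set $\mathcal{E}$ and the three-block factorization $(p,C,q)$. Your inclusion $\{S\in\mathcal{S}:k(S)>9\epsilon|C_i|\}\subseteq\mathcal{E}$ is verified correctly.

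There is, however, one slip in the middle chain. The asserted identity $\sum_{t>9\epsilon\gamma^i}\binom{\gamma^i}{t}(4\epsilon)^t = \Pr[\mathrm{Bin}(\gamma^i,4\epsilon)>9\epsilon\gamma^i]$ is false: the right-hand side carries factors $(1-4\epsilon)^{\gamma^i-t}$ that your sum omits, and without them the sum is not even bounded by $1$ (indeed $\sum_t\binom{\gamma^i}{t}(4\epsilon)^t=(1+4\epsilon)^{\gamma^i}$), so a Chernoff estimate on the binomial tail does not bound it. The cause is discarding the constraint on the non-$\hat p$ coordinates too early by replacing $(u-|\hat p|)^{\gamma^i-t}$ with $u^{\gamma^i-t}$. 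The repair is immediate: the number of $C\in U^{\gamma^i}$ with more than $9\epsilon\gamma^i$ coordinates in $\hat p$ is exactly $\sum_{t>9\epsilon\gamma^i}\binom{\gamma^i}{t}|\hat p|^t(u-|\hat p|)^{\gamma^i-t} = u^{\gamma^i}\,\Pr\big[\mathrm{Bin}(\gamma^i,|\hat p|/u)>9\epsilon\gamma^i\big]$, and since the upper binomial tail is monotone in the success probability and $|\hat p|/u\le 4\epsilon$, this is at most $u^{\gamma^i}\,\Pr\big[\mathrm{Bin}(\gamma^i,4\epsilon)>9\epsilon\gamma^i\big]$. From that point your multiplicative Chernoff constant $c=e^{5/4}/(9/4)^{9/4}<1$, the bound $\gamma^i>n_i/2\ge\alpha n/2\ge\sqrt n/2$, and the conclusion $|\mathcal{E}|\le u^n\cdot n\cdot c^{2\epsilon\sqrt n}<u^n/6$ for large $n$ all go through as you wrote them. (A minor remark: the dichotomy is cleaner at $\epsilon\ge 1/9$ rather than $\epsilon\ge 1/4$, since for $\epsilon\ge 1/9$ one has $k(S)\le|C_i|\le 9\epsilon|C_i|$ vacuously and the lemma reduces to $|\mathcal{S}|\ge u^n/2\ge u^n/3$.)
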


\begin{proof}
Consider a sequence $S$ which is uniformly sampled in $U^n$ one subsequence after the other. We emphasize that we sample from $U^n$  in order to avoid the dependencies associated with sampling from $\mathcal S$. Assume that each prefix  $S_j$ is associated with an arbitrary set $\hat S_j$ with measure at most $4\epsilon$. If it happens that $S_j$ is a prefix of some sequence in $\mathcal S$, then $\hat S_j$ is indeed defined as before to be the set of positive replies. Otherwise $\hat S_j$ can be any set in $U$ of measure at most $4\epsilon$.

Now, since the subsequence $C_j$ is sampled uniformly and independently from $S_{j-1}$, it holds that  $\Ex[|C_j\cap  \hat S_{j-1}|\leq 4\epsilon |C_j|]$, and by a Chernoff bound it holds that $\Pr[|C_j\cap  \hat S_{j-1}|\geq 9\epsilon |C_j|] \leq \exp(- |C_j|)$. Under our assumptions $|C_j| \geq n^{\Omega(1)}$ so by the union bound, with probability at least $1- \log_\gamma n \cdot \exp(-n^{\Omega(1)}) \geq  1-1/n$ all the $j$ for which $|C_j|$ is large enough  satisfy $|C_j\cap  \hat S_{j-1}|\geq 9\epsilon |C_j|$. Again, by the union bound we have
\begin{align*}
|\{S\in \mathcal{S}: k(S)\leq 9\epsilon |C_i|\}| \geq \left( \frac{1}{2}-\frac{1}{n} \right) u^{n} ,
\end{align*}
from which the lemma follows for all $n$ sufficiently large.
\end{proof}

Assume that after the insertion of $C_i$ the data structure uses space  $b_i$ bits. We now describe the encoding itself for a given sequence $S$.

First write the number $i$ from Lemma~\ref{lem:i}, followed by an explicit uncompressed representation of all items in the sequence $S$, except those of $C_i$. This requires at most $(n-c_i)\log u + \log\log n$ bits, where $c_i = |C_i|$. We will use the data structure in order to encode $C_i$ in a more compact form as follows. Recall that $k$ items out of $c_i$ are in $\hat S_{i-1}$. We need at most $c_i$ bits to denote where in the sequence these items are located. Next, we store the data structure itself using $b_i$ bits. We observe that since the data structure is deterministic and we write  all the elements other than $C_i$ explicitly, the encoding thus far characterizes the set $\hat S_{i-1}$. Also, since the data structure itself is written, the encoding so far characterizes the sets $\hat{S_i}$. The remaining part of the encoding consists of the elements of $C_i$ encoded relative to these two sets: We encode the $c_i-k$ elements in ${\hat S_i}\backslash \hat{S}_{i-1}$ using
$(c_i-k)\log((\mu(\hat S_i)-\mu(\hat S_{i-1}))u) + O(1)$
bits and the remaining $k$ elements using $k\log(\mu(\hat S_i)u) + O(1)$ bits.
All in all the length of this part of the encoding is:
\begin{equation}\label{eq:code}
(c_i-k)\log((\mu(\hat S_i)-\mu(\hat S_{i-1}))u) + k\log(\mu(\hat S_i)u) + O(1) .
\end{equation}%
By our choice of $i$ we have
\begin{align*}
\log(\mu(\hat S_i)-\mu(\hat S_{i-1})) \leq \log(\epsilon) - \log\log_\gamma (1/\alpha) + O(1) .
\end{align*}

Plugging in~\eqref{eq:code} and using the fact that $\mu(\hat S_i)\leq \epsilon$, the length is at most
\begin{MyEqn}
& & (c_i-k)(\log u + \log(\epsilon) - \log\log_\gamma (1/\alpha))+ k\log(\epsilon u) + O(c_i)\\
& & \qquad \qquad \leq c_i\left(\log u +\log \epsilon - (1-9\epsilon)\log\log_\gamma (1/\alpha) +O(1)\right) ,
\end{MyEqn}
and the  length of the remaining part of the encoding is at most
\begin{align*}
b_i + \log\log n + (n-c_i)\log u + c_i .
\end{align*}
By \eqref{lem:count}, the total length of the encoding  has to be greater than $\log (u^{n}/3)$ so we have:
\begin{MyEqn}
& & b_i + \log\log n + (n-c_i)\log u   + n_i\left(\log u +\log \epsilon - (1-9\epsilon)\log\log_\gamma(1/\alpha)+O(1)\right)\\
& & \qquad \qquad \geq n\log u - O(1) .
\end{MyEqn}%
which implies that
\begin{align*}
b_i \geq c_i(\log(1/\epsilon) + (1-9\epsilon)\log\log_\gamma (1/\alpha) - O(1)) .
\end{align*}
Finally, since $c_i = \gamma^i$ we have that $c_i = (n_i+\frac{1}{\gamma-1})\cdot \frac{\gamma-1}{\gamma}$ which, together with the assumption $\beta n_i \ge b_i$ in the statement of Theorem \ref{thm:lb}, completes the proof of Theorem~\ref{thm:lb}.


\section{Construction 1 (Warm-Up): Geometrically-Increasing Data Structures}\label{Sec:Construction1}

Our first construction is quite simple and natural and uses a sequence of dynamic approximate membership data structures of geometrically-increasing sizes. When viewing the sequence of inserted elements as consecutive subsequences, where the $i$th subsequence consists of $2^i$ elements, at the beginning of the $i$th subsequence we allocate and initialize a dynamic approximate membership data structure $\mathcal{B}_i$ with a false positive rate $\epsilon_i = \Theta (\epsilon / i^2)$. The elements of the $i$th subsequence are processed by the insertion algorithm of the data structure $\mathcal{B}_i$. A membership query for an element $x \in U$ is performed by invoking the membership algorithm of each of the existing data structures $\mathcal{B}_i$, and reporting $\Yes$ if any of them does. Clearly, as the underlying data structures have no false negatives, then our construction has no false negatives. In addition, a union bound guarantees that the false positive rate is at most $\sum_{i = 1}^{\infty} \epsilon_i = \Theta(\epsilon \pi^2 / 6) \leq \epsilon$ by appropriately adjusting the constants in the choices of the $\epsilon_i$.

We can instantiate the $\mathcal{B}_i$s, for example, with the dynamic approximate membership data structure resulting from the dynamic dictionary of Raman and Rao \cite{RamanR03} (via the general dictionary-based methodology described in Section \ref{SubSec:RelatedWork}). This dynamic approximate data structure supports insertions in constant expected amortized time, membership queries in constant time in the worst case, and its space consumption is $(1 + o(1)) 2^i \log(1/\epsilon_i)$ bits for any set of known size $2^i$ with a false positive rate $\epsilon_i$. This guarantees that, for any number $n$ of elements, the number of bits used by our construction after inserting any $n$ elements is at most
\begin{MyEqn}
& & (1+o(1)) n \cdot \left( \max_{1 \leq i \leq \lceil \log n \rceil} \{ \log(1/\epsilon_i) + O(1) \} \right) \\
& & \qquad = (1+o(1)) n \cdot \left( \max_{1 \leq i \leq \lceil \log n \rceil} \{ \log(1/\epsilon) + \log \left(i^2\right) + O(1) \} \right) \\
& & \qquad =  (1+o(1)) n \log(1/\epsilon) + O(n \log \log n) .
\end{MyEqn}%

Note, however, that membership queries require time $\Theta(\log n)$, since given an element $x$ we do not know to which of the $\mathcal{B}_i$ it might have been inserted in. Therefore we need a separate membership query for each of the $\mathcal{B}_i$. This yields the following theorem:
\begin{theorem}
For any $0 < \epsilon < 1$ there exists a data structure for approximate membership for sets of unknown sizes with the following properties:
\begin{enumerate}
\item The false positive rate is at most $\epsilon$.

\item For any integer $n$, the data structure uses at most $(1+o(1)) n \log(1/\epsilon) + O(n \log\log n)$ bits of space after $n$ insertions.

\item Insertions take expected amortized constant time, and for any integer $n$ membership queries are supported in $O(\log n)$ time after $n$ insertions.
\end{enumerate}
\end{theorem}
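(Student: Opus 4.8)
The plan is to verify the three listed properties for the construction already sketched above: maintain a sequence of dynamic approximate membership data structures $\mathcal{B}_1,\mathcal{B}_2,\dots$, where $\mathcal{B}_i$ is allocated (and initialized) at the start of the $i$th subsequence, receives the $2^i$ elements of that subsequence via its own $\Insert$ algorithm, and is set up with false positive rate $\epsilon_i = c\epsilon/i^2$ for a suitable absolute constant $c$; a $\Membership$ query on $x$ returns $\Yes$ iff some currently-allocated $\mathcal{B}_i$ does. \textbf{No false negatives} is immediate: each $\mathcal{B}_i$ has no false negatives and each inserted element is handed to exactly one $\mathcal{B}_i$, which then answers $\Yes$ on it forever. \textbf{False positive rate:} for a fixed $x\notin S$ we have $x\notin S_i$ for every $i$, so a union bound over the $O(\log n)$ allocated data structures gives false positive probability at most $\sum_{i\ge 1}\epsilon_i = c\epsilon\sum_{i\ge 1} i^{-2} = c\epsilon\pi^2/6$; choosing $c = 6/\pi^2$ (or any smaller constant) makes this at most $\epsilon$. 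Here it is essential that $\sum_i i^{-2}$ converges, which is exactly why the $\epsilon_i$ decay like $i^{-2}$ rather than remaining constant or decaying geometrically.

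Next I would establish the \textbf{space bound}. Instantiate each $\mathcal{B}_i$ via the dictionary-based reduction applied to the Raman--Rao dynamic dictionary, so that while $\mathcal{B}_i$ holds its $n_i \le 2^i$ elements it occupies $(1+o(1))n_i\log(1/\epsilon_i) + O(n_i)$ bits, the $o(1)$ vanishing as $n_i\to\infty$. The crucial observation is that after $n$ insertions the element counts satisfy $\sum_i n_i = n$ exactly, and every allocated $\mathcal{B}_i$ has $i = O(\log n)$, hence $\log(1/\epsilon_i) = \log(1/\epsilon) + 2\log i + O(1) \le \log(1/\epsilon) + O(\log\log n)$. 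Summing over all allocated data structures yields total space at most $(1+o(1))\bigl(\log(1/\epsilon) + O(\log\log n)\bigr)\sum_i n_i = (1+o(1))n\log(1/\epsilon) + O(n\log\log n)$; the contribution of the $O(\log\log n)$ smallest data structures, where the Raman--Rao $o(1)$ overhead has not yet kicked in, totals only $\mathrm{polylog}(n) = o(n)$ bits and is absorbed.

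For the \textbf{running times}: a $\Membership$ query invokes the worst-case $O(1)$-time membership query of each of the $O(\log n)$ allocated $\mathcal{B}_i$ and returns the disjunction, for $O(\log n)$ total (these calls can even be made in parallel). An $\Insert$ of the $j$th element is routed to the current $\mathcal{B}_i$ and costs expected amortized $O(1)$ there by the guarantee on the underlying dictionary; additionally, allocating and initializing $\mathcal{B}_i$ at the start of the $i$th subsequence costs $O(2^i)$, which I charge evenly against the $2^i$ elements of that subsequence, keeping the amortized cost $O(1)$ per insertion. Note that because $n$ is unknown the data structures genuinely must be allocated incrementally rather than up front, which is precisely what makes this charging argument necessary.

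This statement is a warm-up and involves no deep obstacle; the two points that need care are (i) fixing the decay rate $\epsilon_i = \Theta(\epsilon/i^2)$ so that the false-positive series converges to a quantity that can be driven below $\epsilon$, and (ii) the space accounting --- one must resist bounding $\sum_i 2^i$ by a constant multiple of $n$ (which would blow up the leading constant) and instead use that the per-$\mathcal{B}_i$ element counts sum to exactly $n$, combined with the uniform bound $\log(1/\epsilon_i) \le \log(1/\epsilon) + O(\log\log n)$.
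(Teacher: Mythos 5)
Your proposal is correct and follows exactly the paper's approach: geometrically-growing subsequences, per-level data structures $\mathcal{B}_i$ with $\epsilon_i = \Theta(\epsilon/i^2)$, a union bound over $\sum_i \epsilon_i$ for the false positive rate, instantiation with the Raman--Rao dictionary for the space and time bounds, and $O(\log n)$-time queries by probing all allocated structures. The only difference is presentational: you make the space accounting slightly more explicit by writing $\sum_i n_i = n$ and bounding each summand, whereas the paper directly bounds the total by $n$ times the worst per-element cost $\max_i\{\log(1/\epsilon_i)+O(1)\}$ --- both yield the same $(1+o(1))n\log(1/\epsilon) + O(n\log\log n)$ conclusion.
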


\section{Construction 2: Constant-Time Operations}\label{Sec:Construction2}

As in our first construction, when processing a sequence of elements we partition it into consecutive subsequences, where the $i$th subsequence consists of $2^i$ elements. For every integer $i$ we denote the $i$th subsequence by $s_i = x_{2^{i-1}} \cdots x_{2^i - 1}$, and denote by $S_i$ the set $\{ x_{2^{i-1}}, \ldots, x_{2^i - 1} \}$.

Let $\mathcal{H}$ be a pairwise independent collection of functions $h : U \rightarrow \{0,1\}^{\ell}$, where $\ell \ge \lceil \log(1/\epsilon) \rceil + \log u + 2$ and $|U| = u$. For each $h \in \mathcal{H}$ and integer $i \in [\ell]$ we let $h_i : U \rightarrow \{0,1\}^{\ell_i}$ be the leftmost $\ell_i = \lceil \log(1/\epsilon) \rceil + i + 2$ output bits of $h$, and let $g_i : U \rightarrow \{0,1\}^r$ be the next $r = \lceil \log \log u \rceil $ output bits of $h$ (padded with the symbol $\bot$ when less than $r$ such bits are available).


\paragraph{The basic construction.} The data structure is initialized by sampling a function $h\in \mathcal{H}$. At any point in time, when the $i$th subsequence $s_i$ is being processed, the data structure consists of a dynamic dictionary $\mathcal{D}_i$. As discussed in Section \ref{Subsec:Overview}, the insertion procedure operates in one out of two possible modes, depending on whether or not the element that is currently being inserted is the first element of its subsequence. We describe each of these modes separately.
\begin{itemize}
\item {\bf Mode 1.} When the inserted element $x \in S_i$ is not the first of its subsequence, we store the pair $(h_i(x), g_i(x))$ in the current dictionary $\mathcal{D}_i$ using $h_i(x)$ as its key.

\item {\bf Mode 2.} When the inserted element $x \in S_i$ is the first of its subsequence (i.e., $x = x_{2^{i-1}}$), we transition from the current dictionary $\mathcal{D}_{i-1}$ to a new dictionary $\mathcal{D}_i$, deallocate the space used by $\mathcal{D}_{i-1}$, and then proceed as in mode 1 above.

    Specifically, the dictionary $\mathcal{D}_i$ is initialized for storing at most $2^{i+2}$ elements, each of length $\ell_i + r$ bits. If $i > 1$ we initialize $\mathcal{D}_i$ by enumerating all pairs currently stored by $\mathcal{D}_{i-1}$, and processing each such pair $(y, \alpha_1 \cdots \alpha_r) \in \{0,1\}^{\ell_{i-1}} \times \{0,1, \bot\}^r$ as follows: If $\alpha_1 \neq \bot$, we insert to $\mathcal{D}_i$ the pair $(y \alpha_1, \alpha_2 \cdots \alpha_r \bot)$ using $y \alpha_1$ as its key. Otherwise, we insert to $\mathcal{D}_i$ the two pairs $(y0, \alpha)$ and $(y1, \alpha)$ using $y0$ and $y1$ as their keys, respectively.
\end{itemize}%

Membership queries are naturally defined: Given an element $x \in U$ and that the currently dictionary is $\mathcal{D}_i$ for some $i$, we query $\mathcal{D}_i$ with the key $h_i(x)$ to retrieve a pair of the form $(h_i(x), \alpha)$ for some $\alpha$. If such a pair is found we output $\Yes$, and otherwise we output $\No$.

\paragraph{Dealing with failures.} We note that a subtle point in the construction is that each of the dictionaries $\mathcal{D}_1, \mathcal{D}_2, \ldots$ may have a certain failure probability. Using existing dictionaries, the failure probability for each $\mathcal{D}_i$ can be made as small as any polynomial in $2^{-i}$. This means that whenever $i = \Omega(\log u)$, the failure probability can be made polynomially small in $u$, but when $i = o(\log u)$ the failure probability is rather large.

There are two standard methods for dealing with such large failure probabilities. The first is to simply rebuild each $\mathcal{D}_i$ that fails. Even for small values of $i$, the expected number of failures is typically a small constant, and thus we will be able to guarantee good expected performance. The second is to group together into one dictionary the first $u^{\delta}$ elements, for an arbitrary small constant $0 < \delta < 1$. This way, a union bound shows that no dictionary fails except with probability $u^{-c}$ for any pre-determined constant $c > 1$. For simplicity, in what follows we analyze our construction assuming that at least $n > u^{\delta}$ elements are inserted, and that we group together the first $u^{\delta}$ elements.


\paragraph{Optimal space via bucketing.} Note that the transitioning from each dictionary $\mathcal{D}_i$ to $\mathcal{D}_{i+1}$ requires storing both until all elements of $\mathcal{D}_i$ have been transitioned into $\mathcal{D}_{i+1}$ (as explained above). This increases the space used by the data structure by a multiplicative constant factor. Using a standard bucketing technique (see, for example, \cite{DietzfelbingerMadH90, DietzfelbingerR09}) we reduce the space usage of the construction when at least $n > u^{\delta}$ elements are inserted, for an arbitrary small constant $0 < \delta < 1$.

Specifically, we first hash the elements into $u^{\delta/2}$ buckets, and then apply our basic construction in each bucket. For enabling the data structure to gradually allocate more space, the data structures in the buckets are interleaved word-wise: For every $i \in [u^{\delta/2}]$, the data structure of the $i$th bucket resides in memory words whose location is equal to $i$ modulo $u^{\delta/2}$. This guarantees that if the maximum space usage of the data structures in the buckets is $s_{\rm max}$ words, then the total space required for the construction is $u^{\delta/2} \cdot s_{\rm max}$ words (and additional space can be easily allocated).

For any $u^{\delta} < n \leq u$ the hash functions of \cite{DietzfelbingerMadH90, DietzfelbingerR09} split the elements quite evenly: each bucket contains at most $(1 + o(1)) n / u^{\delta/2}$ elements, except with a probability that is polynomially small in $u$. Moreover, these functions can be evaluated in constant time. Applying our basic construction in each bucket guarantees that the transitioning operation occurs in at most one bucket at any point in time, and therefore the additional space that is required is proportional to the number of elements in each bucket and not to total number of elements.

\paragraph{Performance analysis.} The following theorem is obtained by instantiating our construction with a sufficiently good construction of a dynamic dictionary. For example, the dynamic dictionary of Raman and Rao \cite{RamanR03} is space optimal (up to additive lower-order terms), supports insertions in constant expected amortized time, and membership queries in constant time in the worst case.
\begin{theorem}\label{Theorem:Construction2}
For any $0 < \epsilon < 1$, integer $u$, and constant $c > 1$, there exists a data structure for approximate membership for sets of unknown sizes from a universe of size $u$ with the following properties:
\begin{enumerate}
\item The false positive rate is $\epsilon + u^{-c}$.

\item For any constant $0 < \delta < 1$ and $n > u^{\delta}$, the data structure uses at most $(1 + o(1)) n \log(1/\epsilon) + O(n \log\log n)$ bits of space after $n$ insertions.

\item Insertions take expected amortized constant time, and membership queries take constant time in the worst case.
\end{enumerate}
\end{theorem}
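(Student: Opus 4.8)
The plan is to verify the three claimed properties of the basic construction, enhanced with the failure-handling and bucketing refinements, when the underlying dynamic dictionary is instantiated with (say) the Raman--Rao dictionary. The heart of the argument is a counting claim: for every $i$, the dictionary $\mathcal{D}_i$ stores only $O(2^i)$ pairs. First I would establish this invariant by induction on $i$. Consider an element $x$ inserted during subsequence $s_j$; initially it contributes a single pair $(h_j(x), g_j(x))$ with $g_j(x) \in \{0,1,\bot\}^r$. During each transition from $\mathcal{D}_t$ to $\mathcal{D}_{t+1}$, a stored pair whose $g$-component still begins with a genuine bit ($\alpha_1 \ne \bot$) is replaced by exactly one pair (we ``spend'' one bit of $g$ to extend the key), while a pair whose $g$-component has been exhausted ($\alpha_1 = \bot$) splits into two. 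So $x$'s pairs double only after the first $r$ transitions following its insertion; after $i - j$ transitions it contributes $\max\{1, 2^{i-j-r}\}$ pairs. Summing over $j \le i$ gives $\sum_{j} 2^j \cdot 2^{\max\{0,\,i-j-r\}} = O(2^{i-r} \cdot 2^r) = O(2^i)$ pairs in $\mathcal{D}_i$, using $2^r = \Theta(\log u)$ to absorb the geometric tail near $j = i$. This is the main obstacle and the whole point of introducing the $g_i$ gadget; everything else is comparatively routine.

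Next I would bound the space. Each pair stored in $\mathcal{D}_i$ has key length $\ell_i = \lceil \log(1/\epsilon)\rceil + i + 2$ bits and payload length $r = \lceil \log\log u\rceil$ bits, so it occupies $\ell_i + r = \lceil\log(1/\epsilon)\rceil + i + O(\log\log u)$ bits of ``information''. A space-optimal dynamic dictionary storing $m$ keys from a universe of size $2^{\ell_i}$ together with $r$-bit satellite data uses $(1+o(1))\left(m\log\frac{2^{\ell_i}}{m} + mr\right) + O(m)$ bits. When $\Theta(2^{i})$ elements have been inserted and we are in subsequence $s_i$, we have $m = \Theta(2^i)$ and $n = \Theta(2^i)$, so $\log(2^{\ell_i}/m) = \log(1/\epsilon) + O(1)$, giving $(1+o(1)) n\log(1/\epsilon) + O(n\log\log u) = (1+o(1)) n\log(1/\epsilon) + O(n\log\log n)$ bits per bucket (the last equality uses $n > u^{\delta}$, hence $\log\log u = \Theta(\log\log n)$). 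I would then invoke the bucketing argument already sketched in the excerpt: hashing into $u^{\delta/2}$ buckets via the Dietzfelbinger--auf der Heide / Dietzfelbinger--Rink functions splits the $n$ elements so that each bucket gets $(1+o(1))n/u^{\delta/2}$ elements w.h.p., the transition (which needs both $\mathcal{D}_{i-1}$ and $\mathcal{D}_i$ simultaneously, a constant-factor overhead) happens in at most one bucket at a time, and word-wise interleaving lets the structure grow; hence the per-bucket bound multiplies back to the global bound $(1+o(1))n\log(1/\epsilon) + O(n\log\log n)$.

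For correctness (property 1), I would argue there are no false negatives — by induction, if $x$ was inserted in $s_j$ then after the transitions the current $\mathcal{D}_i$ contains the pair with key $h_i(x)$ (the splitting on $\bot$ exactly enumerates all $\ell_i$-bit extensions consistent with the originally stored $\ell_j + r$ bits, and $h_i(x)$ is always one of them) — and then bound the false positive rate. A query on $x \notin S$ returns \Yes{} iff $h_i(x)$ collides with the key of some stored pair. Each genuinely inserted element $y$ accounts for a set of keys forming a subtree in $\{0,1\}^{\le \ell_i}$ whose leaves-at-level-$\ell_i$ number $\le 2^{\max\{0,\,i-j-r\}}$; by pairwise independence of $h$, $\Pr[h_i(x) \text{ hits } y\text{'s key}] \le 2^{\max\{0,i-j-r\}} / 2^{\ell_i}$, and the fact that we also match the next $r$ bits via $g$ before splitting keeps this at $2^{-\ell_j - r} = 2^{-\lceil\log(1/\epsilon)\rceil - j - 2 - r}$ per element of $s_j$. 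Union-bounding over all $n \le u$ inserted elements and all subsequences, and using $\ell \ge \lceil\log(1/\epsilon)\rceil + \log u + 2$ so that the leftover bits dominate $\log u$, the total false positive probability is at most $\epsilon/2 + (\text{lower order})$; adding the $u^{-c}$ from the bucketing-hash failure and dictionary failures gives false positive rate $\epsilon + u^{-c}$. Finally, property 3: between transitions every insertion is a single dictionary insertion, costing expected amortized $O(1)$ by the Raman--Rao guarantee; the transition from $\mathcal{D}_{i-1}$ to $\mathcal{D}_i$ touches $O(2^{i-1})$ pairs at $O(1)$ amortized each, and since it recurs only at the start of each of the $O(\log n)$ subsequences, the total transition work through $n$ insertions is $O(2^i) = O(n)$, i.e. $O(1)$ amortized per insertion; handling dictionary failures by rebuild adds only an expected constant factor. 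Membership queries make a single lookup in the current $\mathcal{D}_i$, which Raman--Rao answers in $O(1)$ worst-case time. Combining these gives Theorem~\ref{Theorem:Construction2}.
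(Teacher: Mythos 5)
Your proof is correct and follows essentially the same approach as the paper: the key counting claim that $\mathcal{D}_i$ holds $O(2^i)$ pairs via the $g_i$-gadget, the Raman--Rao space bound with $\log(2^{\ell_i}/m)=\log(1/\epsilon)+O(1)$, the pairwise-independence false-positive argument (your per-subsequence accounting with rate $2^{-\ell_j-r}$ is just a more explicit version of the paper's union bound over the $\leq 2^{i+2}$ keys), and the bucketing and amortization arguments. One small slip: what $2^r=\Theta(\log u)$ is actually needed for is to control the arithmetic part of the sum coming from small $j$ (the $(i-r)2^{i-r}$ term, where $i-r\le\log u\le 2^r$), not the geometric tail near $j=i$, which converges on its own; also $|s_j|=2^{j-1}$ rather than $2^j$, though neither affects the $O(2^i)$ bound.
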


\begin{proof}
As discussed above, hashing the inserted elements into $u^{\delta/2}$ buckets results in a balanced allocation up to additive lower order terms with all but a polynomially small probability in $u$. Therefore, for simplicity, from this point on we focus on $n$ elements that are inserted into a single bucket. We first prove that for every $i$, at most $2^{i+2}$ elements are inserted into the dictionary $\mathcal{D}_i$. Fix an $i$, and partition the elements that are inserted to $\mathcal{D}_i$ to two disjoint sets: elements that correspond to elements from $S_1, \ldots, S_{i-r}$, and elements that correspond to elements from $S_{i-r+1}, \ldots, S_i$. For each element $x$ that belongs to some $S_j$, we observe that it contributes $2^{i - j - r}$ elements if $1 \leq j \leq i - r$, and exactly one element if $i - r + 1 \leq j \leq i$. Therefore, the number of elements that are inserted into $\mathcal{D}_i$ is
\begin{MyEqn}
\sum_{j = 1}^{i-r} |S_j| \cdot 2^{i-j-r} + \sum_{j = i - r + 1}^i |S_j| & = & \sum_{j = 1}^{i-r} 2^{i-r-1} + \sum_{j = i - r + 1}^i 2^j \\
& \leq & 2^{i+2} .
\end{MyEqn}%

Now, for bounding the false positive rate, fix a sequence $x_1 \cdots x_j$ of inserted elements, an element $x \notin \{x_1, \ldots, x_j\}$, and let $i$ be such that $2^{i-1} \leq j \leq 2^i- 1$. Then, the current state of the data structure consists of a dictionary $\mathcal{D}_i$, and a query for $x$ initiates a membership query for the key $h_i(x)$. Since at most $2^{i+2}$ keys were inserted so far to the dictionary $\mathcal{D}_i$, the pairwise independence of $\mathcal{H}$ guarantees that $x$ forms a collision with some existing element with probability at most $2^{i+2} \cdot 2^{- \ell_i}  \leq \epsilon$. In addition, we assume that the constructions of all the $\mathcal{D}_i$ are successful except with probability $u^{-c}$, and therefore the false positive rate is at most $\epsilon + u^{-c}$.

We now bound the space overhead. Assume that $2^{i-1} \leq n \leq 2^i - 1$ elements were inserted, and that the current dictionary $\mathcal{D}_i$ is constructed using a dictionary that can store $n$ elements from a universe of size $u' = {\rm poly}(n')$ with $r$ bits of satellite data using space $ (1 + o(1)) n (\log(u'/n) + r)$ bits (e.g., \cite{RamanR03} as discussed above). Then, the space utilized by $\mathcal{D}_i$ is at most
\begin{MyEqn}
(1 + o(1)) n ( \log( 2^{\ell_i}/2^{i}) + r) & \leq & (1 + o(1)) n \log(1/\epsilon) + O(n \log \log u) \\
& = & (1 + o(1)) n \log(1/\epsilon) + O(n \log \log n)
\end{MyEqn}%
Finally, note that membership queries are supported in constant time, and that the expected amortized insertion time is also constant (as in the underlying dictionary).
\end{proof}

In the remainder of this section we describe two extensions of our construction. The first extension shows how to enjoy constant-time insertions in the worst case by increasing the space usage from $(1 + o(1)) n \log(1/\epsilon) + O(n \log\log n)$ to $O(n \log(1/\epsilon) + n \log\log n)$. The second extension shows how to support deletions (which have to be carefully defined).

\paragraph{Constant-time insertions in the worst case via de-amortization.} As presented above using the two different insertion modes, the construction enjoys a good amortized insertion time: Most insertions correspond to mode 1 and are processed very fast, while only a small number of insertions correspond to case 2. The main observation is that if the underlying $\mathcal{D}_i$ offers a constant insertion time in the worst case with high probability (e.g., as in \cite{ArbitmanNS10}), then our construction {\em without the bucketing} can be de-amortized: Instead of initializing each $\mathcal{D}_i$ only when inserting $x_{2^{i-1}}$, then the total amount of work required for initializing $\mathcal{D}_i$ can be equally split among the insertions of $x_{2^{i-1}}, \ldots, x_{2^i - 1}$. Specifically, on each such insertion, devote a constant number of additional steps for the initialization of $\mathcal{D}_i$. As shown in the proof of Theorem \ref{Theorem:Construction2}, for every $i$ at most $2^{i+1}$ elements are inserted into the dictionary $\mathcal{D}_{i-1}$. Therefore, the total amount of work (in the worst case) required for initializing $\mathcal{D}_i$ is $O(2^i)$. We note that the idea of bucketing the elements that we used above does not seem useful here. The reason is that it is no longer the case that a transition between dictionaries occurs in at most one bucket at any point in time. Therefore, the space usage (even with bucketing) would be $O(n \log(1/\epsilon) + n \log\log n)$ bits (with a rather small leading constant) instead of $(1 + o(1)) n \log(1/\epsilon) + O(n \log\log n)$ bits as in Theorem \ref{Theorem:Construction2}.

\paragraph{Supporting deletions.} Note that for any approximate membership data structure it is impossible to detect if an attempt is made to delete a false positive. Thus, the data structure must put the burden on its user to ensure that deletions are applied only to elements that are in fact in the set (if this contract is broken, false negatives may arise). In the space analysis we will also assume that insertions are proper, i.e., an element may be inserted at most once.

A well-known approach to supporting deletions~\cite{PaghPR05} is to store the {\em multiset\/} of signatures rather than just the set of distinct signatures. A deletion of an element with signature $h(x)$ is implemented by decreasing the multiplicity of $h(x)$ in the multiset by 1. However, there are  complications when trying to make this technique work in our setting. For example, in Construction 1, the element to be deleted may be a false positive in one of the data structures $\mathcal{B}_i$ and a ``true positive'' in another data structure $\mathcal{B}_j$. The problem is that there is no way to tell which is the false positive, and if we remove $h(x)$ from $\mathcal{B}_i$ a false negative will occur. A similar problem occurs in Construction 2, where it may not be possible to determine which signatures are to be deleted.

Our way around this problem is to abandon the idea of storing a multiset of signatures, but rather use a secondary dictionary data structure whenever we encounter identical signatures. In the following we describe how to augment Construction 2 with deletions. When inserting an element $x$ we first check if it is a false positive of the existing set. Every false positive is inserted in the secondary data structure, while remaining elements are inserted in the primary data structure. Membership queries are extended to also look up the element in the secondary data structure, which has zero false positive rate. The deletion algorithm first checks if the element can be deleted from the secondary data structure. If not, its signature(s) need to be deleted from the primary structure. However, elements that were inserted when the set was smaller may be associated with a large number of signatures, formed by extending an original signature with all possible bit strings to form a set of possible signatures matching the current signature length. To allow efficient deletion, we extend the information that is stored with each key with the length of the original signature, and with a bit that can be used to indicate deletion. Deletion is performed by marking the lexicographically smallest signature in the set (i.e., the one extended with only zeros) as deleted. The membership procedure is then modified to compute this signature, and check whether it has been marked as deleted. To ensure that we do not use significant space for signatures of deleted keys, we run a background process that periodically checks if each signature can be removed from the set, spending constant time per update. In a similar way, we periodically check keys in the secondary structure to see if they remain false positives, or can be moved to the primary structure.

It is easy to see that the data structure will work correctly (under the assumption of proper deletions and insertions). What is less obvious is how much extra space is needed for the secondary structure. Observe that we may without loss of generality assume that the false positive rate is at most $1/\log u$, since we allow a space overhead of $O(n\log\log n)$ bits, and $n>u^\delta$. This means that the expected number of false positives in a set of $n$ elements is $O(n/\log u)$, so storing this set requires just $O(n)$ bits in expectation. To ensure a high probability bound on the space usage, we need a stronger hash function to compute the signatures. In particular, from~\cite{DietzfelbingerGMP92} it follows that using constant-degree polynomial hash functions we can ensure that the number of signature collisions, corresponding to false positives, will be within a constant factor of the expectation with probability $1-u^{-c}$, for any desired constant $c$.

\section{Directions for Future Research}\label{Sec:FutureResearch}

Our work raises several fundamental directions for future research both from a theoretical perspective and from a practical perspective. From a theoretical perspective, an interesting problem is to tighten our lower bound by identifying the leading constant in the additive $\Omega(n \log\log n)$ factor. In addition, it would be interesting to explore whether our constructions can be improved by a data structure that simultaneously enjoys the best of both worlds: space consumption of $(1 + o(1)) n \log(1/\epsilon) + O(n \log\log n)$ bits and constant-time operations in the worst case with high probability.

From a more practical perspective, while Bloom filters \cite{Bloom70} provide a practical solution in the setting where an upper bound $n$ is known in advance, our cosntruction do not seem to enjoy the same level of practicality in the setting where such an upper is not known in advance. Specifically, our first construction supports membership queries in time $O(\log n)$, which may be too slow in some applications, and our second construction suffers from non-trivial hidden constants due to our de-amortization technique. It would be very interesting to design a practical solution that matches our space lower bound.


\bibliographystyle{newbib2}
\bibliography{BIB}

\end{document}